\newtheorem{theorem}{Theorem}
\newtheorem{lemma}{Lemma}
\theoremstyle{definition}
\newtheorem{corollary}{Corollary}
\newtheorem{remark}{Remark}
\def \h#1{\widehat{#1}}
\def \b#1{\overline{#1}}
\def \t#1{\widetilde{#1}}
\def \c#1{\accentset{\circ}{#1}}
\def \th#1{\widehat{\widetilde{#1}}}
\def \dt#1{\underaccent{\tilde}{#1}}
\def \dh#1{\underaccent{\hat}{#1}}
\def \dth#1{\underaccent{\tilde}{\underaccent{\hat}{#1}}}
\newdimen\stockheight
\newdimen\stockwidth
\numberwithin{equation}{section}
\title{Wronskian solutions of integrable systems}
\author{Da-jun Zhang\footnote{E-mail: djzhang@staff.shu.edu.cn }
   \\
  {\small\it Department of Mathematics, Shanghai University,
    Shanghai 200444, P.R. China}}
\date{\today}
\begin{document}

\maketitle

\begin{abstract}
  \noindent
  Wronski determinant (Wronskian) provides a compact form for $\tau$-functions
  that play roles in a large range of mathematical physics.
  In 1979 Matveev and Satsuma,  independently, obtained solutions in Wronskian form for the Kadomtsev-Petviashvili equation.
  Later, in 1981 these solutions  were constructed from Sato's approach.
  Then in 1983, Freeman and Nimmo  invented the so-called Wronskian technique,
  which allows directly verifying bilinear equations when their solutions are given in terms of Wronskians.
  In this technique the considered bilinear equation is usually reduced to the Pl\"ucker relation on Grassmannians,
  and finding solutions of the bilinear equation is transferred to find a Wronskian vector that is
  defined by a linear differential equation system.
  General solutions of such differential equation systems can be constructed by means of
  triangular Toeplitz matrices.
  In this monograph we review the Wronskian technique and solutions in Wronskian form,
  with supporting instructive examples, including the Korteweg-de Vries (KdV) equation, the modified KdV equation,
  the Ablowitz-Kaup-Newell-Segur hierarchy and reductions, and the lattice potential KdV equation.
  (Dedicated to Jonathan J C Nimmo).

\end{abstract}

\section{Introduction}\label{sec-1}

One of kernel figures in the realm of integrable theory is $\tau$-function,
in terms of which multi-solitons of integrable systems are expressed.
There are several remarkable ways to solve integrable systems and provide $\tau$-functions with explicit forms.
In the Inverse Scattering Transform (IST) $\tau$-functions are written by means of the Cauchy matrices (cf.\cite{AblS-1981}).
Same expressions are also employed in some direct approaches (eg. Cauchy matrix approach \cite{NijAH-JPA-2009,XuZZ-JNMP-2014,ZhaZ-SAPM-2013}
and operator approach \cite{Sch-LAA-2010}).
Hirota's exponential polynomials provide a second form for $\tau$-functions which
can be derived in bilinear method \cite{Hir-book-2004} or constructed using vertex operators \cite{MiwJD-book-2000}.
A third form for $\tau$-functions is the Wronskian form
which was constructed using Darboux transformations \cite{MatS-book-1991} or Wronskian technique \cite{FreN-PLA-1983}.

In 1979 Matveev and Satsuma, independently, derived solutions in Wronskian form for the Kadomtsev-Petviashvili (KP)
equation \cite{Mat-LMP-1979,Sat-JPSJ-1979}.
Two years later  these solutions  were reconstructed from the celebrated Sato approach \cite{Sato-RIMS-1981}.
Then in 1983 Freeman and Nimmo  invented  the Wronskian technique,
which provides a procedure to verify bilinear KP and Korteweg-de Vries (KdV) equations
when their solutions are given in terms of Wronskians \cite{FreN-PLA-1983}.
Soon after it proved popular
in integrable systems \cite{Fre-IMA-1984,FreN-PRSL-1983,Nim-PLA-1983-Toda,NimF-PLA-1983-BSQ,NimF-PLA-1983-RS,NimF-JPA-1984-BT}.
In this technique the considered bilinear equation will be reduced to a known identity,
says, usually, the Pl\"ucker relation on Grassmannians;
and seeking solutions of the bilinear equation is conveyed to find a Wronskian vector that is
defined by a linear differential equation system (LDES for short).
Taking the KdV equation as an example, which has the following LDES
\begin{equation}
\phi_{xx}=A\phi,~~\phi_t=-4\phi_{xxx},
\label{LDES-KdV}
\end{equation}
where $\phi=(\phi_1,\phi_2,\cdots,\phi_N)^T$ and $A$ is arbitrary in $\mathbb{C}_{N\times N}$.
In Freeman-Nimmo's consideration the coefficient matrix $A$ is diagonal with distinct nonzero eigenvalues \cite{FreN-PLA-1983}.
This was generalised to the case of $A$ being a Jordan block \cite{SirHR-PLA-1988} in 1988.
Explicit general solutions of such a LDES with arbitrary $A$ can be written out by means of
either variation of constants method of ordinary differential equations \cite{MaY-TAMS-2005} or
triangular Toeplitz matrices \cite{Zha-2006},
and solutions can then be classified according to canonical forms of $A$.

In this monograph we review the Wronskian technique and solutions in Wronskian form for integrable equations.
Instructive examples include the KdV equation, modified KdV (mKdV) equation,
the Ablowitz-Kaup-Newell-Segur (AKNS) hierarchy and reductions, and lattice potential KdV (lpKdV) equation,
which cover Wronskian, double Wronskian and Casorarian forms of solutions.

The review is organized as follows.
Sec.\ref{sec-2} serves as a preliminary in which we introduce  bilinear equation
and B\"acklund transformation (BT),
notations of Wronskians, some determinantal identities, and triangular Toeplitz matrices.
In Sec.\ref{sec-3}, for the KdV equation we show how the Wronskian technique works
in verifying bilinear equation and BT,
and how to present explicit general solutions of its LDES.
% by means of triangular Toeplitz matrices.
Limit relation between multiple pole and simple pole solutions are also explained.
Sec.\ref{sec-4} is for solutions of the mKdV equation, which exhibits many aspects different from the KdV case.
Sec.\ref{sec-5} serves as a part for double Wronskians and reduction technique,
and Sec.\ref{sec-6} introduces Casoratian technique with the lpKdV equation
as a fully discrete example.
Finally, conclusions are given in Sec.\ref{sec-7}.

\section{Preliminary}\label{sec-2}

\subsection{The KdV stuff}\label{sec-2-1}

Let us go through the stuff of the KdV equation
\begin{equation}
u_t+6uu_x+u_{xxx}=0,
\label{kdv-eq}
\end{equation}
which will serve as a demonstration in Wronskian technique.
It has a Lax pair
\begin{subequations}\label{KdV-Lax}
\begin{align}
& \phi_{xx}+u\phi=-\lambda \phi, \label{KdV-Lax-a}\\
& \phi_{t}=-4\phi_{xxx}-6u\phi_x-3u_x\phi,\label{KdV-Lax-b}
\end{align}
\end{subequations}
where $\lambda$ is a spectral parameter.
Employing the transformation
\begin{equation}
u=2(\ln f)_{xx},
\label{KdV-trans}
\end{equation}
the KdV equation \eqref{kdv-eq} is written as its bilinear form
\begin{equation}
(D_{t}D_{x}+D^4_{x})f\cdot f=0,
\label{KdV-blinear}
\end{equation}
where $D$ is the well-known Hirota's bilinear operator defined by \cite{Hir-PRL-1971,Hir-PTP-1974}
\begin{equation}
D^m_xD^n_y f(x,y)\cdot g(x,y)=(\partial_x-\partial_{x'})^m (\partial_y-\partial_{y'})^n f(x,y)g(x',y')|_{x'=x,y'=y}.
\label{D}
\end{equation}
Hirota gave the following compact form for the $N$-soliton solution of \eqref{KdV-blinear} \cite{Hir-PRL-1971}:
\begin{equation}
f=\sum_{\mu=0,1} \mathrm{exp}\left(\sum^{N}_{j=1} \mu_j \eta_j+\sum^N_{1\leq i<j}\mu_i\mu_j a_{ij}\right),
\label{Nss-f}
\end{equation}
where $\eta_j=k_ix-k_i^3 t +\eta_i^{(0)}$ with $k_i, \eta_i^{(0)}\in \mathbb{R}$, $e^{a_{ij}}=\Bigl(\frac{k_i-k_j}{k_i+k_j}\Bigr)^2$,
and the summation of $\mu$ means to take
all possible $\mu_j=0,1$ $(j=1,2,\cdots, N)$.
A proof of \eqref{Nss-f} satisfying \eqref{KdV-blinear} can be found in \cite{Hir-PRL-1971} and \cite{AblS-1981}.

The bilinear KdV equation \eqref{KdV-blinear} admits a bilinear BT \cite{Hir-PTP-1974}
\begin{subequations}\label{kdv-BT-bil}
\begin{align}
& D^2_xf\cdot g=\lambda fg,\label{kdv-BT-bil-a}\\
& (D_x^3+D_t +3 \lambda D_x)f\cdot g =0, \label{kdv-BT-bil-b}
\end{align}
\end{subequations}
which indicates that if $f$ is a solution of \eqref{KdV-blinear} and we solve the BT \eqref{kdv-BT-bil} to get $g$,
then $g$ will be a solution of \eqref{KdV-blinear} as well and $u=2(\ln g)_{xx}$
provides a second solution to the KdV equation.
Note that taking $\phi=g/f$ together with \eqref{KdV-trans}, the BT \eqref{kdv-BT-bil} will recover the Lax pair \eqref{KdV-Lax},
and vice versa, from \eqref{KdV-Lax} to \eqref{kdv-BT-bil}.

\subsection{Wronskians}\label{sec-2-2}

Wronskian is the determinant of a square matrix where its columns are arranged
with consecutively increasing order derivatives of the first column.
Consider
\begin{equation}
\phi=(\phi_1,\phi_2,\cdots,\phi_N)^T
\label{phi}
\end{equation}
where $\phi_i=\phi_i(x)$ are $C^{\infty}$ functions.
Then a Wronskian with $\phi$ as the first (elementary) column is
$W=|\phi, \phi^{(1)},  \phi^{(2)},\cdots,\phi^{(N-1)}|$,
where $\phi^{(i)}=\partial^i_x\phi$.
It can be more compactly expressed as (cf.\cite{FreN-PLA-1983})
\[W=|0,1,2,\cdots,N-1|=|\h{N-1}|.\]
Due to its special structure, derivatives of a Wronskian have quite simple expressions. For example,
\[W_x = |\h{N-2},N|,~~ W_{xx} = |\h{N-3},N-1,N|+|\h{N-2},N+1|,\]
and if $\phi=\phi(x,y,t)$ with dispersion relation $\phi_y=\phi_{xx}$  and $\phi_t=\phi_{xxx}$,
then one has
\begin{align*}
& W_y = |\h{N-2},N+1|-|\h{N-3},N-1,N|,\\
& W_{t} = |\h{N-2},N+2|-|\h{N-3},N-1,N+1|+|\h{N-4},N-2,N-1,N|.
\end{align*}

A double Wronskian is generated by two elementary column vectors
\begin{equation}
\varphi=(\varphi_1,\varphi_2,\cdots,\varphi_{N+M})^T,~~
\psi=(\psi_1,\psi_2,\cdots,\psi_{N+M})^T,
\label{phipsi-NM}
\end{equation}
with the form
$W=|\varphi, \varphi^{(1)},  \varphi^{(2)},\cdots,  \varphi^{(N-1)};\,
\psi, \psi^{(1)}, \psi^{(2)},\cdots,  \psi^{(M-1)}|$,
and  can be  simply written as (cf.\cite{Nimmo-PLA-1983-NLS})
\[W=|0,1,2,\cdots,N-1;\, 0,1,2,\cdots, M-1|=|\h{N-1};\h{M-1}|.\]
Taking the advantage of its structure, derivatives of a double Wronskian is simple as well, e.g.
$W_x=|\h{N-2},N;\h{M-1}|+|\h{N-1};\h{M-2},M|$.

\subsection{Determinantal identities}\label{sec-2-3}

The Wronskian technique allows directly verifying a solution in Wronskian form of a bilinear equation.
Although a Wronskian provides simple expressions for its derivatives, during the verification
one needs some determinantal identities to simplify high order derivatives.
Finally, the bilinear equation to be verified is reduced to the Pl\"ucker relation (Laplace expansion of a zero-valued determinant).
Let us go through these determinantal identities.

\begin{theorem}\label{thm-2-3-1}\cite{Zha-2006}
Let  $\Xi\in \mathbb{C}_{N\times N}$ and denote its column vectors as $\{\Xi_j\}$;
let $\Omega=(\Omega_{i,j})_{N\times N}$ be an operator matrix (i.e. $\Omega_{i,j}$ are operators),
and denote its column vectors as  $\{\Omega_j\}$.
The following relation holds,
\begin{equation}
\sum^N_{j=1} |\Omega_j * \Xi|
=\sum^N_{j=1}|(\Omega^T)_{j} * \Xi^T|,
\label{id-w-2}
\end{equation}
where
\begin{equation*}
|A_j * \Xi|=|\Xi_1,\cdots,\Xi_{j-1},~A_j \circ\Xi_j,~\Xi_{j+1},\cdots, \Xi_{N}|,
\end{equation*}
and $A_j \circ\Xi_j$ stands for
\begin{equation*}
A_j \circ B_j=(A_{1,j}B_{1,j},~A_{2,j}B_{2,j},\cdots, A_{N,j}B_{N,j})^T,
\end{equation*}
in which $A_j =(A_{1,j},~A_{2,j},\cdots, A_{N,j})^T,~ B_j=(B_{1,j},~B_{2,j},\cdots, B_{N,j})^T$ are $N$th-order vectors.
\end{theorem}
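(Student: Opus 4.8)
The plan is to reduce both sides to the same double sum over the entries of $\Xi$ and their cofactors, by performing a Laplace (cofactor) expansion along the single Hadamard-modified column. First I would observe that in $|\Omega_j * \Xi|$ every column except the $j$-th coincides with the corresponding column of $\Xi$; only the $j$-th column carries the entrywise-applied operators, its $i$-th entry being $\Omega_{i,j}\Xi_{i,j}$. Since the operators have already acted, this is an ordinary function-valued determinant, and expanding it along its $j$-th column gives
\begin{equation*}
|\Omega_j * \Xi| = \sum_{i=1}^N (-1)^{i+j}\,\Omega_{i,j}\Xi_{i,j}\,M_{i,j},
\end{equation*}
where $M_{i,j}$ is the ordinary $(i,j)$-minor of $\Xi$ (the deleted column $j$ was the only modified one, so the minor is operator-free). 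Summing over $j$ collapses the left-hand side to $\sum_{i,j}(-1)^{i+j}\Omega_{i,j}\Xi_{i,j}M_{i,j}$.

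Next I would treat the right-hand side identically. The matrix underlying $|(\Omega^T)_j * \Xi^T|$ differs from $\Xi^T$ only in its $j$-th column, whose $i$-th entry is $(\Omega^T)_{i,j}(\Xi^T)_{i,j}=\Omega_{j,i}\Xi_{j,i}$, so cofactor expansion along column $j$ yields $\sum_i (-1)^{i+j}\Omega_{j,i}\Xi_{j,i}M'_{i,j}$, with $M'_{i,j}$ the $(i,j)$-minor of $\Xi^T$. Summing over $j$ gives $\sum_{i,j}(-1)^{i+j}\Omega_{j,i}\Xi_{j,i}M'_{i,j}$.

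The key step linking the two expressions is the identification of the transposed minors. Deleting row $i$ and column $j$ from $\Xi^T$ produces exactly the transpose of the matrix obtained by deleting row $j$ and column $i$ from $\Xi$; since the determinant is invariant under transposition, $M'_{i,j}=M_{j,i}$. Substituting this and then relabelling the summation indices $i\leftrightarrow j$ turns the right-hand double sum into $\sum_{i,j}(-1)^{i+j}\Omega_{i,j}\Xi_{i,j}M_{i,j}$, which is precisely the left-hand side, completing the argument.

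The only genuinely delicate point is the minor identification $M'_{i,j}=M_{j,i}$, which I would verify directly from the index bookkeeping rather than by appeal to intuition; the operator nature of $\Omega$ causes no trouble, because in each term of the expansion exactly one operator entry appears multiplying an operator-free minor, so there is no ambiguity of operator ordering. As an alternative route I would note that the full Leibniz expansion $\det = \sum_\sigma \mathrm{sgn}(\sigma)\prod_k(\cdot)_{\sigma(k),k}$ gives $\sum_j|\Omega_j*\Xi| = \sum_\sigma \mathrm{sgn}(\sigma)\bigl(\sum_j\Omega_{\sigma(j),j}\bigr)\prod_k\Xi_{\sigma(k),k}$, and the substitution $\sigma\mapsto\sigma^{-1}$ maps this verbatim onto the corresponding expression for the right-hand side, yielding the identity in a single stroke.
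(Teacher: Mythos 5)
Your argument is correct. Note that the paper itself offers no proof of this theorem --- it is quoted from \cite{Zha-2006} --- and your route (cofactor expansion along the single Hadamard-modified column, reducing both sides to $\sum_{i,j}(-1)^{i+j}\Omega_{i,j}\Xi_{i,j}M_{i,j}$ via the minor identification $M'_{i,j}=M_{j,i}$) is precisely the standard argument behind the cited result, so there is no divergence to reconcile. One small caution on your alternative Leibniz route: the shorthand $\sum_\sigma \mathrm{sgn}(\sigma)\bigl(\sum_j \Omega_{\sigma(j),j}\bigr)\prod_k \Xi_{\sigma(k),k}$ must be read with each $\Omega_{\sigma(j),j}$ acting only on its matched entry $\Xi_{\sigma(j),j}$, as you stipulate earlier; taken literally, with the operator applied to the whole product, it would be false for the differential operators $\Omega_{i,j}\equiv\partial_x^2$ that the paper actually substitutes in Sec.~\ref{sec-3-1}.
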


\begin{theorem}\label{thm-2-3-2}

Let $\mathbf{a}_j=(a_{1,j},a_{2,j},\cdots,a_{N,j})^T,~ j=1,\cdots, 2N$
be $N$th-order column vectors over $\mathbb{C}$. The Pl\"ucker relation is described as
\begin{equation}
\sum^{N+1}_{j=1} (-1)^{N+1-j} |\mathbf{a}_1, \mathbf{a}_2, \cdots,\, \mathbf{a}_{j-1},  \mathbf{a}_{j+1},\cdots, \,
\mathbf{a}_{N+1}|\cdot
 |\mathbf{a}_j, \mathbf{a}_{N+2}, \cdots, \,  \mathbf{a}_{2N}|=0.
\label{plu-r-3}
\end{equation}
\end{theorem}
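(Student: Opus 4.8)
The plan is to realize the alternating sum in \eqref{plu-r-3} as the block (Laplace) expansion of a single $2N\times 2N$ determinant that is identically zero; the identity then emerges as the statement that this vanishing determinant equals its own cofactor expansion.

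First I would assemble the auxiliary determinant
\[
\Delta=\begin{vmatrix}
\mathbf{a}_1 & \cdots & \mathbf{a}_{N+1} & \mathbf{a}_{N+2} & \cdots & \mathbf{a}_{2N}\\
\mathbf{a}_1 & \cdots & \mathbf{a}_{N+1} & 0 & \cdots & 0
\end{vmatrix},
\]
a $2N\times 2N$ determinant whose columns are built by stacking two copies of the first $N+1$ vectors in the left block and by placing $\mathbf{a}_{N+2},\dots,\mathbf{a}_{2N}$ above zero vectors in the right block (the left block contributes $N+1$ columns and the right block $N-1$, for $2N$ columns of height $2N$). I would then show $\Delta=0$. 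The lower halves of the first $N+1$ columns are the vectors $\mathbf{a}_1,\dots,\mathbf{a}_{N+1}\in\mathbb{C}^{N}$, which are necessarily linearly dependent; fix a nontrivial relation $\sum_{j=1}^{N+1}c_j\mathbf{a}_j=0$. Because the upper halves of these same columns are identical to the lower halves, the combination $\sum_{j=1}^{N+1}c_j(\text{column }j)$ vanishes in both halves simultaneously, so the full $2N$-dimensional columns are linearly dependent and $\Delta=0$.

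Next I would expand $\Delta$ by Laplace expansion along its top $N$ rows. A term survives only when the $N$ columns assigned to the bottom $N$ rows avoid the zero columns $N+2,\dots,2N$; hence the bottom block must receive $N$ of the first $N+1$ columns, omitting a single index $j$, while the top block receives $\mathbf{a}_j$ together with all of $\mathbf{a}_{N+2},\dots,\mathbf{a}_{2N}$. Each surviving term is therefore the product $|\mathbf{a}_1,\dots,\widehat{\mathbf{a}_j},\dots,\mathbf{a}_{N+1}|\cdot|\mathbf{a}_j,\mathbf{a}_{N+2},\dots,\mathbf{a}_{2N}|$, and summing over $j=1,\dots,N+1$ reproduces the left-hand side of \eqref{plu-r-3} up to signs.

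The delicate step, and the one I expect to be the main obstacle, is pinning down the sign $(-1)^{N+1-j}$. This requires combining the Laplace-expansion sign (fixed by the parity of the sum of the column indices selected for the top block) with the cost of moving $\mathbf{a}_j$ into first position in the second factor and of restoring natural order in the first factor. I would verify the sign on the small case $N=2$ and then carry out the general parity count, after which the vanishing of $\Delta$ immediately yields \eqref{plu-r-3}.
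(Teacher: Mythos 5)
Your proposal is correct and is essentially the paper's own proof: the paper likewise obtains \eqref{plu-r-3} as the Laplace expansion w.r.t.\ the first $N$ rows of a $2N\times 2N$ zero-valued determinant built from a doubled copy of the columns $\mathbf{a}_1,\dots,\mathbf{a}_{N+1}$ and a zero block alongside $\mathbf{a}_{N+2},\dots,\mathbf{a}_{2N}$, differing from yours only in placing the zero block in the top-right rather than the bottom-right (an immaterial swap of the two row blocks, costing at most a global sign on an expression that vanishes anyway). Your explicit linear-dependence argument for the vanishing, and the parity count you outline, fill in steps the paper leaves implicit; the count indeed gives $(-1)^{N+1-j}$ in the paper's arrangement, and since $j<N+2$ the column $\mathbf{a}_j$ already stands first in natural order in the second factor, so no reordering cost arises.
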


In fact, \eqref{plu-r-3} is a Laplace expansion w.r.t. the first $N$ rows of the following zero-valued determinant
\begin{equation*}
\left|
\begin{array}{cccccc}
a_{1,1} & \cdots & a_{1,N+1} & 0 & \cdots & 0\\
\vdots & \vdots & \vdots & \vdots & \vdots & \vdots \\
a_{N,1} & \cdots & a_{N,N+1} & 0 & \cdots & 0\\
a_{1,1} & \cdots & a_{1,N+1} & a_{1,N+2} & \cdots & a_{1,2N}\\
\vdots & \vdots & \vdots & \vdots & \vdots & \vdots \\
a_{N,1} & \cdots & a_{N,N+1} & a_{N,N+2} & \cdots & a_{N,2N}
\end{array}\right|.
\end{equation*}

Special cases of \eqref{plu-r-3} are the following.
\begin{corollary}\label{cor-2-3-1}
Let $P\in \mathbb{C}_{N\times(N-1)}$, $Q \in \mathbb{C}_{N\times(N-k+1)}$ be the remained of $P$ after removing its arbitrary $k-2$ columns where
$3\leq k < N$, and $\mathbf{a}_i,~ i=1, 2, \cdots, k$, be $N$th-order column vectors. Then one has
\begin{equation}
\sum^{k}_{i=1}(-1)^{i-1} |P, \mathbf{a}_i| \cdot
     |{Q}, \mathbf{a}_1, \cdots, \mathbf{a}_{i-1}, \mathbf{a}_{i+1}, \cdots, \mathbf{a}_{k}|=0,
\quad\quad k \geq 3.
\label{plu-r-2}
\end{equation}
\end{corollary}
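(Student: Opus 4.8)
The plan is to derive \eqref{plu-r-2} as a degenerate case of the full Pl\"ucker relation \eqref{plu-r-3} from Theorem \ref{thm-2-3-2}, by assigning the $2N$ columns so that all but $k$ of the $N+1$ terms collapse to zero through a repeated column. First I would write the columns of $P$ as $p_1,\dots,p_{N-1}$. Since reordering these columns multiplies every determinant $|P,\mathbf{a}_i|$, and correspondingly every $|Q,\dots|$, by one and the same sign (independent of $i$) and hence leaves the vanishing of the left-hand side of \eqref{plu-r-2} intact, I may assume without loss of generality that the $k-2$ deleted columns are the last ones, so that $Q=(p_1,\dots,p_{N-k+1})$.

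Next I would apply \eqref{plu-r-3} with the first $N+1$ columns taken to be $\mathbf{a}_1,\dots,\mathbf{a}_k,p_1,\dots,p_{N-k+1}$ (the count is $k+(N-k+1)=N+1$) and the trailing block $\mathbf{a}_{N+2},\dots,\mathbf{a}_{2N}$ taken to be the full set $p_1,\dots,p_{N-1}$ of columns of $P$ (which is $N-1$ columns, so the total is $2N$). The structural observation is then immediate: when the summation index $j$ in \eqref{plu-r-3} ranges over the positions occupied by $p_1,\dots,p_{N-k+1}$, the second factor is $|p_m,p_1,\dots,p_{N-1}|$ with $m\le N-k+1\le N-1$, which has a repeated column and therefore vanishes. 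Only the $k$ terms with $j\in\{1,\dots,k\}$ — those whose pivot column is one of the $\mathbf{a}_i$ — survive.

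For each surviving term the first factor of \eqref{plu-r-3} is $|\mathbf{a}_1,\dots,\widehat{\mathbf{a}_i},\dots,\mathbf{a}_k,p_1,\dots,p_{N-k+1}|$, which equals $|Q,\mathbf{a}_1,\dots,\widehat{\mathbf{a}_i},\dots,\mathbf{a}_k|$ up to the sign $(-1)^{(N-k+1)(k-1)}$ picked up by shifting the block $Q$ of $N-k+1$ columns to the front past the remaining $k-1$ columns, while the second factor is $|\mathbf{a}_i,p_1,\dots,p_{N-1}|=(-1)^{N-1}|P,\mathbf{a}_i|$. Substituting these together with the Pl\"ucker weight $(-1)^{N+1-i}$ and factoring out every sign independent of $i$, the residual alternating sum is exactly \eqref{plu-r-2}.

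I expect the only delicate point to be the sign bookkeeping: one must confirm that the combined sign of the three contributions — the Pl\"ucker weight $(-1)^{N+1-i}$, the block-shift sign $(-1)^{(N-k+1)(k-1)}$, and the sign $(-1)^{N-1}$ from moving $\mathbf{a}_i$ — collapses (modulo an overall constant that divides out of the whole sum) to the advertised $(-1)^{i-1}$, the $N$-dependent pieces combining as $(-1)^{N+1-i}(-1)^{N-1}=(-1)^{i}$. The combinatorial half of the argument, namely killing the $N+1-k$ extraneous terms via the repeated-column mechanism, is essentially automatic once the columns are assigned as above; it is the constraint $3\le k<N$ that guarantees both that $Q$ is a proper nonempty truncation of $P$ and that the chosen partition of the $2N$ columns is admissible.
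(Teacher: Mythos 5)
Your proposal is correct and follows exactly the route the paper intends: Corollary \ref{cor-2-3-1} is presented there as a special case of the Pl\"ucker relation \eqref{plu-r-3}, and your column assignment (pivot block $\mathbf{a}_1,\dots,\mathbf{a}_k,p_1,\dots,p_{N-k+1}$, trailing block $p_1,\dots,p_{N-1}$), the repeated-column vanishing of the $N+1-k$ extraneous terms, and the sign bookkeeping $(-1)^{N+1-i}(-1)^{N-1}=(-1)^{i}$ up to the $i$-independent factor $(-1)^{(k-1)(N-k+1)}$ all check out. The paper merely asserts the specialization without detail, so your write-up supplies precisely the omitted verification.
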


This is a practical formula to generate identities used in Wronskian verification. For example,
when $k=4$, \eqref{plu-r-2} yields
\begin{align*}
 &|{P},\mathbf{a}_{1}|
   \cdot|{Q},\mathbf{a}_{2},\mathbf{a}_{3},\mathbf{a}_{4}|
 -|{P},\mathbf{a}_{2}|
   \cdot|{Q},\mathbf{a}_{1},\mathbf{a}_{3},\mathbf{a}_{4}| \\
 &+|{P},\mathbf{a}_{3}|
   \cdot|{Q},\mathbf{a}_{1},\mathbf{a}_{2},\mathbf{a}_{4}|
 -|{P},\mathbf{a}_{4}|
   \cdot|{Q},\mathbf{a}_1,\mathbf{a}_{2},\mathbf{a}_{3}|  = 0,
% \label{plu-r-2-k4}
\end{align*}
and when $k=3$,
\begin{equation}
|M, \mathbf{a},\mathbf{b}||M, \mathbf{c},\mathbf{d}|-|M, \mathbf{a},\mathbf{c}||M, \mathbf{b},\mathbf{d}|
+|M, \mathbf{a},\mathbf{d}||M, \mathbf{b},\mathbf{c}|=0,
\label{plu-r-1}
\end{equation}
where we have taken $P=(Q, \mathbf{p}_{N-1})$, $M=Q$,
$\mathbf{a}=\mathbf{p}_{N-1}, \mathbf{b}=\mathbf{a}_{1}, \mathbf{c}=\mathbf{a}_{2}, \mathbf{d}=\mathbf{a}_{3}$,
where $\mathbf{p}_{N-1}$ is the last column of $P$.

\subsection{Triangular Toeplitz matrices}\label{sec-2-4}

Triangular Toeplitz matrices are used to express general solutions of the LDES like \eqref{LDES-KdV}.
A lower triangular Toeplitz matrix (LTTM) of order $N$ is defined as
\begin{equation}
\mathcal{A}=\left(\begin{array}{cccccc}
a_0 & 0    & 0   & \cdots & 0   & 0 \\
a_1 & a_0  & 0   & \cdots & 0   & 0 \\
\cdots &\cdots &\cdots &\cdots &\cdots &\cdots \\
a_{N-1} & a_{N-2} & a_{N-3}  & \cdots &  a_1   & a_0
\end{array}\right)\in \mathbb{C}_{N\times N}.
\label{A-LTT}
\end{equation}
All such matrices of order $N$ compose a commutative set, denoted by $\widetilde{G}_N$, w.r.t. matrix multiplication.
It is easy to find that $G_N=\bigl \{\mathcal{A} \bigl |~\bigr. \mathcal{A}\in \widetilde{G}_N,~|\mathcal{A}|\neq 0 \bigr\}$
is an Abelian group.

\begin{lemma}
\label{Lem-2-4-1}
\cite{Zha-2006}
1. A $C^{\infty}$ function $\alpha(k)$ can generate a LTTM \eqref{A-LTT} via
\begin{equation}
a_j=\frac{1}{j!}\partial^{j}_{k}\alpha(k),~~j=0,1,\cdots,N-1.
\end{equation}
2. On the other hand, for any $\mathcal{A}$ defined as \eqref{A-LTT},
there exists a complex polynomial $\alpha(z)=\sum^{N-1}_{j=0}\alpha_j z^{N-1-j}$, such that
\begin{equation*}
\partial^{j}_{z} \alpha(z)|_{z=k}=j!a_j,~~j=0,1,\cdots,N-1.
\end{equation*}
3. For any given
$\mathcal{A}\in G_N$, there exist  $\pm \mathcal{B}\in G_N$ such
that $\mathcal{B}^2=\mathcal{A}$.
\end{lemma}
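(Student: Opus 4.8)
The plan is to treat the three parts as one coherent package built on the observation that the assignment $\alpha\mapsto\mathcal{A}$, which sends a function to the LTTM \eqref{A-LTT} whose subdiagonal entries are its Taylor coefficients at $k$, is an isomorphism of commutative algebras. Parts 1 and 2 then become the two halves of this correspondence, and part 3 follows by transporting an ordinary square root through it.

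For part 1 there is almost nothing to do: given a $C^\infty$ function $\alpha(k)$ I simply set $a_j=\frac{1}{j!}\partial_k^j\alpha(k)$ for $j=0,1,\ldots,N-1$, which are $N$ complex numbers, and place them along the subdiagonals to form the LTTM \eqref{A-LTT}. For part 2, given the entries $a_0,\ldots,a_{N-1}$ of an arbitrary $\mathcal{A}$, I would define $\alpha(z)=\sum_{j=0}^{N-1}a_j(z-k)^j$, a polynomial of degree $N-1$ in $z$; differentiating $j$ times and evaluating at $z=k$ annihilates all but the $j$-th term and returns $\partial_z^j\alpha(z)|_{z=k}=j!\,a_j$, as required. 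Re-expanding $\alpha$ in the monomial basis $z^{N-1-j}$ exhibits the coefficients $\alpha_j$ and matches the stated normal form, so both parts are immediate once the Taylor-coefficient dictionary is written down.

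The substantive step is to record that this dictionary respects multiplication. If $\mathcal{A},\mathcal{B}\in\widetilde{G}_N$ carry subdiagonal entries $a_j,b_j$, then the product $\mathcal{A}\mathcal{B}$ is again an LTTM whose $m$-th subdiagonal entry is the convolution $c_m=\sum_{i=0}^{m}a_i b_{m-i}$; on the other hand, the Taylor coefficients of a product of functions obey exactly this same Leibniz convolution. Hence $\alpha\mapsto\mathcal{A}$ is a homomorphism of commutative algebras, identifying $\widetilde{G}_N$ with the algebra of $(N-1)$-jets at $k$, equivalently $\mathbb{C}[z]/((z-k)^N)$, and $\mathcal{A}$ is invertible precisely when $a_0=\alpha(k)\neq 0$.

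For part 3, since $\mathcal{A}\in G_N$ its generating function satisfies $\alpha(k)=a_0\neq 0$, so on a neighbourhood of $k$ a branch of $\beta(z)=\sqrt{\alpha(z)}$ is a well-defined $C^\infty$ function with $\beta^2=\alpha$; feeding $\beta$ through part 1 produces an LTTM $\mathcal{B}$ with $b_0=\sqrt{a_0}\neq 0$, hence $\mathcal{B}\in G_N$, and the homomorphism property gives $\mathcal{B}^2=\mathcal{A}$, the two branches of the square root yielding exactly $\pm\mathcal{B}$. To confirm that these are the only solutions and to keep the argument self-contained, I would alternatively solve $\mathcal{B}^2=\mathcal{A}$ directly as the convolution system $\sum_{i+j=m}b_i b_j=a_m$: the top equation $b_0^2=a_0$ forces $b_0=\pm\sqrt{a_0}$, and for $m\geq 1$ the relation $2b_0 b_m=a_m-\sum_{0<i<m}b_i b_{m-i}$ determines $b_m$ uniquely because $2b_0\neq 0$. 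This recursion produces precisely two LTTMs, differing by an overall sign and both with nonzero leading entry, which is exactly the claim. The only place demanding a little care is the homomorphism claim underlying part 3; once the convolution identity for the Taylor coefficients of a product is in hand, everything else is bookkeeping, and the recursive solution of $\mathcal{B}^2=\mathcal{A}$ simultaneously yields existence, uniqueness up to sign, and membership in $G_N$.
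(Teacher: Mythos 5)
Your proof is correct. Note that the paper itself offers no proof of this lemma --- it is stated with a citation to [Zha-2006] --- so there is nothing in the text to compare against; your argument supplies a clean, self-contained verification. The key observation you make, that an LTTM is $\sum_{j}a_jE^j$ with $E$ the nilpotent shift, so that $\widetilde{G}_N\cong\mathbb{C}[z]/\bigl((z-k)^N\bigr)$ and multiplication of LTTMs is convolution of Taylor coefficients, is exactly the mechanism that makes the function--matrix dictionary of parts 1 and 2 multiplicative, and it is the standard route in the cited reference as well. Your two derivations of part 3 are complementary and both sound: the analytic one (a branch of $\sqrt{\alpha}$ exists near $k$ since $\alpha(k)=a_0\neq 0$, then push through the jet homomorphism) buys conceptual clarity, while the recursive one ($b_0^2=a_0$, then $2b_0b_m=a_m-\sum_{0<i<m}b_ib_{m-i}$ with $2b_0\neq 0$) buys existence, uniqueness up to overall sign among LTTMs, and the membership $\mathcal{B}\in G_N$ (via $|\mathcal{B}|=b_0^N\neq 0$) all at once, which is slightly more than the lemma asserts. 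No gaps.
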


We may also consider a  block LTTM defined as
\begin{equation}
\mathcal{A}^B_{[*]}
=\left(\begin{array}{cccccc}
A_0 & 0    & 0   & \cdots & 0   & 0 \\
A_1 & A_0  & 0   & \cdots & 0   & 0 \\
%A_2 & A_1  & A_0 & \cdots & 0   & 0 \\
\cdots &\cdots &\cdots &\cdots &\cdots &\cdots \\
%0   & 0    & 0   & \cdots & k_j & 0 \\
A_{N-1} & A_{N-2} & A_{N-3}  & \cdots &  A_1   & A_0
\end{array}\right)_{2N\times 2N},
\label{A-LTT-block}
\end{equation}
where for $\mathcal{A}^B_{[D]}$ we take
$A_j=\biggl(\begin{array}{cc}
a_{j1} & 0   \\
0 & a_{j2}
\end{array}\biggr)$,
for $\mathcal{A}^B_{[T]}$ we take
$A_j=\biggl(\begin{array}{cc}
a_{j} & 0   \\
b_{j} & a_{j}
\end{array}\biggr)$,
% with $a_jb_j\neq 0$,
and for $\mathcal{A}^B_{[\epsilon]}$ we take
$A_j=\biggl(\begin{array}{cc}
a_{j} & \epsilon b_{j}   \\
b_j & a_{j}
\end{array}\biggr)$
with
%$a_jb_j\neq 0$,
$\epsilon\in \mathbb{R}$ and usually taking $\epsilon = \pm 1$.
Note that with regard to the block LTTM $\mathcal{A}^B_{[*]}$, for each
case of ``*'' with different $A_j$,
there are similar properties as $\mathcal{A}$ holds \cite{Zha-2006}.

\section{The KdV equation}\label{sec-3}

\subsection{Wronskian solution of the bilinear KdV}\label{sec-3-1}

Let us repeat Freeman-Nimmo's Wronskian technique \cite{FreN-PLA-1983} with a slight extension.

\begin{theorem}\label{thm-3-1-1}
The bilinear KdV equation \eqref{KdV-blinear} admits a Wronskian solution
\begin{equation}
f=|\h{N-1}|
\label{f-wr}
\end{equation}
which is composed by the first column vector
$\phi(x,t)=(\phi_1,\phi_2,\cdots,\phi_N)^T$ satisfying LDES
\begin{subequations}\label{wc-1}
\begin{align}
& \phi_{xx}=-A \phi,\label{wc-1-x}\\
& \phi_{t}= -4\phi_{xxx}, \label{wc-1-t}
\end{align}
\end{subequations}
where $A=(a_{ij})_{N\times N}\in \mathbb{C}_{N\times N}$ is arbitrary.
\end{theorem}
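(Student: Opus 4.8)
The plan is to verify directly that the Wronskian $f=|\widehat{N-1}|$ satisfies the bilinear KdV equation by computing the three derivative quantities $D_tD_x f\cdot f$ and $D_x^4 f\cdot f$ in terms of Wronskians, and then showing their sum reduces to a Plücker relation of the form given in Corollary~\ref{cor-2-3-1}. The key idea is that the LDES~\eqref{wc-1} controls how the higher derivatives $\phi^{(j)}$ for $j\geq N$ re-enter the Wronskian: the condition $\phi_{xx}=-A\phi$ means that second $x$-derivatives do not produce genuinely new columns but instead feed back through the matrix $A$, while $\phi_t=-4\phi_{xxx}$ ties the time evolution to spatial derivatives. This is what makes the whole computation close up.

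First I would record the needed derivatives of $f$ with respect to $x$ and $t$. Using the standard Wronskian differentiation rules illustrated in Sec.~\ref{sec-2-2}, I would compute $f_x=|\widehat{N-2},N|$, then $f_{xx}$, $f_{xxx}$, and $f_{xxxx}$ as sums of Wronskians with one or two shifted columns, and similarly $f_t=-4\,(\,|\widehat{N-2},N+1|-|\widehat{N-3},N-1,N|\,)$ obtained from \eqref{wc-1-t}. Writing the bilinear expression out via the definition~\eqref{D}, $D_tD_x f\cdot f = f_{xt}f - f_x f_t + \cdots$ and $D_x^4 f\cdot f = 2(f_{xxxx}f - 4 f_{xxx}f_x + 3 f_{xx}^2)$, I would collect everything into products of pairs of Wronskians. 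At this stage the expressions involve columns labelled up to $N+2$.

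The step where the matrix $A$ must be used — and where I expect the real work to lie — is in handling the derivative identity that lets one replace the ``condition'' imposed by \eqref{wc-1-x}. Because $A$ is an arbitrary matrix rather than diagonal, one cannot simply read off $\phi_i^{(2)}=-\lambda_i\phi_i$ columnwise as in Freeman--Nimmo's original argument. Instead I would invoke Theorem~\ref{thm-2-3-1}: applying the operator-matrix identity~\eqref{id-w-2} with $\Omega$ built from $A$ allows one to show that the ``extra'' term arising from differentiating, namely a sum $\sum_j|\ldots,A\phi^{(\cdot)},\ldots|$ over columns, can be transposed into a single scalar factor $\mathrm{tr}\,A$ times lower Wronskians. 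Concretely, the identity converts the $A$-dependence into a trace contribution that is symmetric and therefore cancels between the $D_tD_x$ and $D_x^4$ pieces. Verifying this cancellation carefully, and tracking the signs and column orderings, is the main obstacle: it is the one place where the generality of $A$ (as opposed to a diagonal or Jordan form) genuinely enters, and it is what guarantees the theorem holds for \emph{arbitrary} $A\in\mathbb{C}_{N\times N}$.

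Once the $A$-terms are shown to cancel, the remaining expression is a sum of products of Wronskians differing only by which shifted columns ($N$, $N+1$, $N+2$) they contain. I would identify this with the $k=3$ Plücker relation~\eqref{plu-r-1} by choosing $M=|\widehat{N-3}|$ (the common first $N-2$ columns) and taking $\mathbf{a},\mathbf{b},\mathbf{c},\mathbf{d}$ to be the columns $\phi^{(N-1)},\phi^{(N)},\phi^{(N+1)},\phi^{(N+2)}$ as appropriate. Matching the three bilinear products against the three terms of~\eqref{plu-r-1} shows the total is identically zero, completing the verification. The only care needed here is bookkeeping of coefficients so that the $2$'s and $4$'s from the Hirota operators line up with the signs in the Plücker identity.
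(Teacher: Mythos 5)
Your proposal follows essentially the same route as the paper's proof: direct computation of the Wronskian derivatives, use of Theorem \ref{thm-2-3-1} to deal with the arbitrary matrix $A$, and a final reduction to the $k=3$ Pl\"ucker relation \eqref{plu-r-1} with $M=(\h{N-3})$. Two points of execution need correcting, however. First, the trace mechanism is not a symmetric cancellation between the $D_tD_x$ and $D_x^4$ pieces; indeed, after substituting the derivative formulas the bilinear expression \eqref{kdv-bil-2w1} contains no explicit $A$ at all, since \eqref{wc-1-x} has not yet been used. The way \eqref{wc-1-x} enters is through \eqref{id-w-2} with $\Omega_{ij}\equiv\partial_x^2$, which gives $\mathrm{Tr}(A)f=|\h{N-3},N-1,N|-|\h{N-2},N+1|$, and the key step is to apply this a \emph{second} time and invoke the scalar identity $f\,\mathrm{Tr}(A)(\mathrm{Tr}(A)f)=(\mathrm{Tr}(A)f)^2$: this rewrites the combination of highest-order terms generated by $f_{xxxx}$ and $f_{xt}$ (those with columns $N+2$, $N+3$ and deep down-shifts) as the perfect square $\bigl(|\h{N-3},N-1,N|-|\h{N-2},N+1|\bigr)^2$, which then merges with the $3f_{xx}^2$ term to leave exactly three bilinear products. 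If you attempt the "trace terms cancel between the two Hirota pieces" version literally, the computation will not close. Second, in the final Pl\"ucker step the four distinguished columns are $N-2$, $N-1$, $N$, $N+1$ (not $N-1,\dots,N+2$); the columns $N+2$ and $N+3$ have already been removed by the trace identity before \eqref{plu-r-1} is invoked. With these two corrections your plan coincides with the paper's proof.
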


\begin{proof}
We calculate derivatives
\begin{align*}
& f_{x}=|\widehat{N-2},N|,~~ f_{xx}=|\widehat{N-3},N-1,N|+|\widehat{N-2},N+1|,\\
& f_{xxx}=|\widehat{N-4},N-2,N-1,N|+2|\widehat{N-3},N-1,N+1|+|\widehat{N-2},N+2|,\\
& f_{xxxx}=|\widehat{N-5},N-3,N-2,N-1,N|+3|\widehat{N-4},N-2,N-1,N+1|\\
&~~~~~~~~~~~~~ +2|\widehat{N-3},N,N+1|+3|\widehat{N-3},N-1,N+2|+|\widehat{N-2},N+3|,\\
& f_{t}=-4(|\widehat{N-4},N-2,N-1,N|-|\widehat{N-3},N-1,N+1|+|\widehat{N-2},N+2|),\\
& f_{tx}=-4(|\widehat{N-5},N-3,N-2,N-1,N|-|\widehat{N-3},N,N+1|\\
& ~~~~~~~~~~~ +|\widehat{N-2},N+3|).
\end{align*}
By  substitution the bilinear KdV equation \eqref{KdV-blinear} yields
\begin{align}
& f_{xt}f-f_xf_t + f_{xxxx}f- 4f_{xxx}f_x + 3(f_{xx})^2 \nonumber\\
=& - 3|\widehat{N-1}|\Bigl(|\widehat{N-5},N-3,N-2,N-1,N|-|\widehat{N-4},N-2,N-1,N+1|\nonumber\\
& -2|\widehat{N-3},N,N+1|-|\widehat{N-3},N-1,N+2|+|\widehat{N-2},N+3|\Bigr)\nonumber\\
& -12|\widehat{N-2},N||\widehat{N-3},N-1,N+1| \nonumber \\
& +3\Bigl(|\widehat{N-3},N-1,N|+|\widehat{N-2},N+1|\Bigr)^{2}.
\label{kdv-bil-2w1}
\end{align}
Next, we make use of Theorem \ref{thm-2-3-1} in which we take $\Omega_{ij}\equiv \partial^2_x$ and $\Xi=(\h{N-1})$.
It follows from \eqref{id-w-2} that
\begin{equation}
\mathrm{Tr}(A)f = |\widehat{N-3},N-1,N|-|\widehat{N-2},N+1|.
\label{id-w-1-ap-1}
\end{equation}
Similarly, we can calculate $\mathrm{Tr}(A)(\mathrm{Tr}(A)f)$.
Then, using equality $f\mathrm{Tr}(A)(\mathrm{Tr}(A)f)=(\mathrm{Tr}(A)f)^2$
we find
\begin{align*}
&|\widehat{N-1}|\Bigl(|\widehat{N-5},N-3,N-2,N-1,N|+2|\widehat{N-3},N,N+1|   \\
& ~~ -|\widehat{N-3},N-1,N+2|-|\widehat{N-4},N-2,N-1,N+1|+|\widehat{N-2},N+3|\Bigr)   \\
=& \Bigl(|\widehat{N-3},N-1,N|-|\widehat{N-2},N+1|\Bigr)^{2},
\end{align*}
which can be used to eliminate some terms in \eqref{kdv-bil-2w1} generated from $f_{xxxx}$ and $f_{xt}$.
As a result, we have
\begin{align*}
& f_{xt}f-f_xf_t + f_{xxxx}f- 4f_{xxx}f_x + 3(f_{xx})^2 \\
=& 12\Bigl(|\widehat{N-1}||\widehat{N-3},N,N+1|-\!|\widehat{N-2},N||\widehat{N-3},N-1,N+1|\\
&~~ +|\widehat{N-3},N-1,N||\widehat{N-2},N+1|\Bigr),
\end{align*}
which is zero if we make use of \eqref{plu-r-1} and take
$M=(\h{N-3}),~ \mathbf{a}=N-2,~ \mathbf{b}=N-1,~ \mathbf{c}=N,~ \mathbf{d}=N+1$.
Thus, a direct verification of Wronskian solution of the bilinear KdV equation is completed.
\end{proof}

\begin{remark}\label{rem-3-1}
Note that within the complex filed $\mathbb{C}$ any square matrix $A$ is similar to a lower triangular canonical form $A'$
by a transform matrix $P$, i.e. $A=P^{-1}A' P$.
Define $\psi=P\phi$. Then \eqref{wc-1} yields $\psi_{xx}=-A'\psi,~\psi_t=-4\psi_{xxx}$
and the Wronskians composed by $\phi$ and $\psi$ are related to each other as $f(\psi)=|P|f(\phi)$.
Obviously, if $f(\phi)$ solves the bilinear KdV equation \eqref{KdV-blinear}, so does $f(\psi)$.
In this sense, we only need to consider the LDES \eqref{wc-1} with a canonically formed $A$.
\end{remark}

\subsection{Wronskian solution of the bilinear BT}\label{sec-3-2}

The bilinear BT \eqref{kdv-BT-bil} can transform solutions of the KdV equation from $N-1$ solitons to $N$ solitons
by taking $g=|\h{N-2}|$ and $f=|\h{N-1}|$.
This was proved by Nimmo and Freeman in \cite{NimF-JPA-1984-BT}.
In the following we give a proof for a more general case.

\begin{theorem}\label{thm-3-2-1}
The Wronskians
\begin{equation}
g=|\h{N-2},\sigma_N|,~~f=|\h{N-1}|
\label{fg-BT-w}
\end{equation}
satisfy the bilinear BT \eqref{kdv-BT-bil} with $\lambda=-a_{NN}$,
where $\phi$ is governed by the LDES \eqref{wc-1} in which $A$ is lower triangular,
and $\sigma_j$ is defined as
\begin{equation}
\sigma_j=(\delta_{j,1},\delta_{j,2},\cdots,\delta_{j,N})^T,~~ \delta_{j,i}=\left\{\begin{array}{ll} 1 & j=i,\\ 0& j\neq i.\end{array}\right.
\label{sigma-j}
\end{equation}
\end{theorem}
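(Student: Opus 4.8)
The plan is to verify the two components \eqref{kdv-BT-bil-a} and \eqref{kdv-BT-bil-b} of the bilinear BT directly, reducing each to the Pl\"ucker relation \eqref{plu-r-1} after clearing the higher-derivative columns by trace-type identities generated from Theorem \ref{thm-2-3-1}. First I would expand the Hirota operators, so that \eqref{kdv-BT-bil-a} becomes $f_{xx}g-2f_xg_x+fg_{xx}=\lambda fg$ and \eqref{kdv-BT-bil-b} becomes $f_{xxx}g-3f_{xx}g_x+3f_xg_{xx}-fg_{xxx}+f_tg-fg_t+3\lambda(f_xg-fg_x)=0$. The $x$- and $t$-derivatives of $f=|\h{N-1}|$ are already listed in the proof of Theorem \ref{thm-3-1-1}. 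For $g=|\h{N-2},\sigma_N|$ I would differentiate by the same column rule, using that $\sigma_N$ is a constant column annihilated by $\partial_x$ and $\partial_t$; this gives $g_x=|\h{N-3},N-1,\sigma_N|$ and $g_{xx}=|\h{N-4},N-2,N-1,\sigma_N|+|\h{N-3},N,\sigma_N|$, together with the analogous $g_{xxx}$ and $g_t$.

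For \eqref{kdv-BT-bil-a} I would rewrite every determinant over the common block $M=(\h{N-3})$, so that $f=|M,N-2,N-1|$, $g=|M,N-2,\sigma_N|$, $f_x=|M,N-2,N|$, $g_x=|M,N-1,\sigma_N|$, $f_{xx}=|M,N-1,N|+|M,N-2,N+1|$ and $g_{xx}=|\h{N-4},N-2,N-1,\sigma_N|+|M,N,\sigma_N|$. Two summands lie outside the $|M,\cdot,\cdot|$ pattern, and each is removed by an identity of the type \eqref{id-w-1-ap-1}. The column $N+1$ in $f_{xx}$ is eliminated by \eqref{id-w-1-ap-1} itself, rearranged as $|M,N-2,N+1|=|M,N-1,N|-\mathrm{Tr}(A)f$. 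The mismatched column in $g_{xx}$ is cleared by applying Theorem \ref{thm-2-3-1} once more with $\Omega_{ij}\equiv\partial^2_x$, now to the matrix whose determinant is $g$: since $\sigma_N$ is annihilated by $\partial^2_x$ and couples to the lower-triangular $A$ only through the single entry $a_{NN}$, the right-hand side of \eqref{id-w-2} collapses to $-\mathrm{Tr}(\widetilde A)\,g$ with $\widetilde A$ the leading $(N-1)\times(N-1)$ block of $A$, giving $|\h{N-4},N-2,N-1,\sigma_N|=|M,N,\sigma_N|+\mathrm{Tr}(\widetilde A)\,g$. Substituting both, the trace contributions add up to $-\mathrm{Tr}(A)+\mathrm{Tr}(\widetilde A)+a_{NN}$, which is zero because $\mathrm{Tr}(A)-\mathrm{Tr}(\widetilde A)=a_{NN}$; this is exactly what pins the spectral parameter to $\lambda=-a_{NN}$. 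The surviving terms are $2\bigl(|M,N-1,N|\,g-|M,N-2,N|\,|M,N-1,\sigma_N|+f\,|M,N,\sigma_N|\bigr)$, which is the Pl\"ucker relation \eqref{plu-r-1} with $\mathbf a=N-2$, $\mathbf b=N-1$, $\mathbf c=N$, $\mathbf d=\sigma_N$, hence vanishes.

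Equation \eqref{kdv-BT-bil-b} I would handle in the same spirit, with more bookkeeping. After inserting $f_{xxx},f_t,g_{xxx},g_t$ and regrouping over a common block, the higher columns such as $N+1$ and $N+2$ produced by the cubic and time derivatives are cleared by \eqref{id-w-1-ap-1} and by the second-order trace identity already used in the proof of Theorem \ref{thm-3-1-1}, while their $g$-counterparts are cleared by the corresponding identities coming from Theorem \ref{thm-2-3-1} applied to $g$. The coefficient $3\lambda=-3a_{NN}$ is again precisely what cancels every residual trace, after which the remainder is annihilated by \eqref{plu-r-1}.

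The main obstacle is the trace bookkeeping rather than the Pl\"ucker step. The point I would have to get exactly right is that $g$ is effectively the order-$(N-1)$ Wronskian of the truncated vector $(\phi_1,\dots,\phi_{N-1})^T$, which obeys the same LDES \eqref{wc-1} with the leading block $\widetilde A$; consequently Theorem \ref{thm-2-3-1} applied to $g$ returns $\mathrm{Tr}(\widetilde A)$ in place of $\mathrm{Tr}(A)$, and the deficit $\mathrm{Tr}(A)-\mathrm{Tr}(\widetilde A)=a_{NN}$ supplies the spectral parameter. Tracking the signs and coefficients of these trace identities so that the non-Pl\"ucker remainder cancels cleanly is the delicate part; once they are correct, and once $A$ is assumed lower triangular so that $a_{NN}$ is genuinely decoupled, the reduction to \eqref{plu-r-1} closes both equations.
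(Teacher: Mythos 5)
Your proposal is correct and takes essentially the same route as the paper: the same derivative formulas for $f$ and $g$, the same trace machinery from Theorem \ref{thm-2-3-1} (your identity $\mathrm{Tr}(\widetilde A)\,g=|\h{N-4},N-2,N-1,\sigma_N|-|\h{N-3},N,\sigma_N|$, obtained via the truncated vector $(\phi_1,\dots,\phi_{N-1})^T$, is exactly the paper's identity $\mathrm{Tr}(A)g=|\h{N-4},N-2,N-1,\sigma_N|-|\h{N-3},N,\sigma_N|+\sum_{j=1}^{N}a_{jN}|\h{N-2},\sigma_j|$ specialized to lower-triangular $A$, the paper instead keeping $A$ general and killing the residual column $\t\sigma_N=(a_{1N},\dots,a_{NN}+\lambda)^T$ at the end), and the same reduction to the Pl\"ucker relation \eqref{plu-r-1} with $M=(\h{N-3})$, $\mathbf{a}=N-2$, $\mathbf{b}=N-1$, $\mathbf{c}=N$, $\mathbf{d}=\sigma_N$. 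Your cancellation $-\mathrm{Tr}(A)+\mathrm{Tr}(\widetilde A)+a_{NN}=0$ pinning $\lambda=-a_{NN}$ is right, and your sketch-level treatment of \eqref{kdv-BT-bil-b} matches the paper, which likewise proves only \eqref{kdv-BT-bil-a} in full and defers the second equation to a reference.
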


\begin{proof}
We only prove \eqref{kdv-BT-bil-a}. Note that
\[
g_{x}=|\widehat{N-3},N-1,\sigma_N|,~~ g_{xx}=|\widehat{N-4},N-2,N-1,\sigma_N|+|\widehat{N-3},N,\sigma_N|.
\]
From  \eqref{kdv-BT-bil-a} we have
\begin{align}
& f_{xx}g -2f_xg_x+fg_{xx}-\lambda  fg \nonumber\\
= & \Bigl(|\widehat{N-3},N-1,N|+|\widehat{N-2},N+1|\Bigr)|\h{N-2},\sigma_N|\nonumber\\
&  -2 |\h{N-2},N||\h{N-3},N-1,\sigma_N|\nonumber\\
&  +|\widehat{N-1}|\Bigl(|\widehat{N-4},N-2,N-1,\sigma_N|+|\widehat{N-3},N,\sigma_N| -\lambda  |\widehat{N-2},\sigma_N|\Bigr).
\label{kdv-BT-w-x2}
\end{align}
To simplify the above form, similar to \eqref{id-w-1-ap-1}, for $g$ we have
\[\mathrm{Tr}(A)g= |\widehat{N-4},N-2,N-1,\sigma_N|-|\widehat{N-3},N,\sigma_N|+\sum_{j=1}^{N}a_{jN}|\h{N-2},\sigma_j|.
\]
Then, making use of the equality generated from $g(\mathrm{Tr}(A)f)=f(\mathrm{Tr}(A)g)$,
\eqref{kdv-BT-w-x2} is reduced to
\begin{align*}
& 2\Bigl(|\widehat{N-3},N-1,N||\widehat{N-2},\sigma_N|-|\widehat{N-2},N||\widehat{N-3},N-1,\sigma_N|\\
&~ +|\widehat{N-1}||\widehat{N-3},N,\sigma_N|\Bigr) - |\widehat{N-1}||\widehat{N-2},\t\sigma_N|,
\end{align*}
where $\t\sigma_N=(a_{1N}, a_{2N}, \cdots, a_{NN}+\lambda)^T$.
The last term is zero if we require $A$ to be lower triangular and $\lambda=-a_{NN}$;
the first three terms together contribute a zero value due to \eqref{plu-r-1}
with $M=(\h{N-3}),~~ \mathbf{a}=N-2,~~ \mathbf{b}=N-1,~~ \mathbf{c}=N,~~ \mathbf{d}=\sigma_N$.

\eqref{kdv-BT-bil-b} can be proved in a similar way. One can refer to \cite{XuanOZ-arxiv-2007} for more details.
\end{proof}

Note that bilinear BT \eqref{kdv-BT-bil} provides a transformation
between two solutions \eqref{fg-BT-w}. As a starting point $g$ must be a solution of the bilinear KdV equation.
The simplest case is $g=1$, i.e. $N=1,~\sigma_1=1$.

\subsection{Classification of solutions}\label{sec-3-3}

Based on Remark \ref{rem-3-1}, it is possible to classify solutions of the KdV equation according to canonical forms of $A$.
In general, from the viewpoint of the IST, under strict complex analysis in direct scattering,
transmission coefficient $T(k)$ admits only distinct simple poles,
which are defined on imaginary axis of $k$-plane.
Each soliton is identified by one of these simple poles.
However, in bilinear approach or Darboux transformation, solutions are derived through more direct ways without restriction on those poles.
For example, multiple-pole solutions can be obtained when $A$ in \eqref{wc-1} is a Jordan block.

In the following, we list out  elementary Wronskian column vectors
in terms of the canonical forms of $A$.
Note that mathematically there is no need to differentiate whether the eigenvalues of $A$ are real or complex.

\vskip 5pt
\noindent
{\it Case 1}
\begin{equation}
A={\rm Diag}(-k^2_1,-k^2_2,\cdots,-k^2_N),
\label{nor-matrix-Case1}
\end{equation}
where  $\{ \lambda_j= -k^2_j\}$ are distinct nonzero numbers.
In this case, $\phi$ satisfying \eqref{wc-1} is given as \eqref{phi}
with
\begin{equation}
\phi_j=b^{+}_{j}e^{\xi_j} + b^{-}_{j}e^{-\xi_j},
\label{entry-Case1-1}
\end{equation}
or equivalently
\begin{equation}
\phi_j=a^{+}_{j}\cosh {\xi_j} + a^{-}_{j}\sinh {\xi_j},
\label{entry-Case1}
\end{equation}
where
\begin{equation}
\xi_j=k_j x -4k^3_j t +\xi^{(0)}_j,
\label{xi}
\end{equation}
and $a^{\pm}_{j}, b^{\pm}_{j}, \xi^{(0)}_j\in \mathbb{C}$.

Note that if all $k_j, a^{\pm}_{j}, b^{\pm}_{j}, \xi^{(0)}_j$ are real, together with
taking $a^{\pm}_{j}=[1\mp(-1)^{j}]/2$ and $0<k_1 <k_2 < \cdots < k_N$ in \eqref{entry-Case1},
the corresponding Wronskian $f$ generates a $N$-soliton
solution. In fact, such a Wronskian can be written as
\begin{equation}
f=
K\cdot \biggl (\prod^{N}_{j=1}e^{-\xi_j}\biggr)
    \sum_{\mu=0, 1}\exp \biggl\{\sum_{j=1}^{N}2\mu_{j}\xi'_{j}+\sum_{1\leq j<l\leq N}
   \mu_{j}\mu_{l}a_{jl} \biggr\},
\label{Hirota-KdV}
\end{equation}
where the sum over $\mu=0,1$ refers to each of $\mu_j=0,1$ for $j=0,1,\cdots, N$, and
\begin{equation*}
\xi'_{j}=\xi_{j}-\frac{1}{4}\sum_{l=1,l\neq j}^{N}a_{jl},
~~e^{a_{jl}}=\biggl(\frac{k_l-k_j}{k_l+k_j}\biggr)^2,~~
K= \prod_{1\leq j<l\leq N}(k_{l}-k_{j}).
\end{equation*}
This is nothing but Hirota's expression \eqref{Nss-f} for $N$-soliton solution.
A similar proof for \eqref{Hirota-KdV} can be found in Ref.\cite{ZhaC-PA-2003}.

If all $k_j$ are pure imaginary, we replace $k_j$ with $i\kappa_j$ where $\kappa_j\in \mathbb{R}$
and $i$ is the imaginary unit,
$\phi_j$ \eqref{entry-Case1} can be rewritten as
\begin{equation}
\phi^{+}_j=a^{+}_{j}\cos {\theta_j} + a^{-}_{j}\sin {\theta_j},~~a^{\pm}_{j}\in \mathbb{C},
\label{entry-Case2}
\end{equation}
where
\begin{equation}
\theta_j=\kappa_j x +4\kappa^3_j t+ \theta^{(0)}_j ,~~\theta^{(0)}_j\in \mathbb{R}.
\label{theta}
\end{equation}

\vskip 5pt
\noindent
{\it Case 2}
\begin{equation}
A=-\t J_N[k_1],~~\t J_N[k_1]\doteq \left(\begin{array}{ccccccc}
k^2_1 & 0    & 0   & \cdots & 0   & 0 & 0 \\
2k_1   & k^2_1  & 0   & \cdots & 0   & 0 & 0\\
1   & 2k_1    & k^2_1 & \cdots & 0   & 0 & 0\\
\cdots &\cdots &\cdots &\cdots &\cdots &\cdots &\cdots \\
0   & 0    & 0   & \cdots & 1 & 2k_1  & k^2_1
\end{array}\right)_{N\times N}.
\label{A-k-KdV}
\end{equation}
A general solution of the LDES \eqref{wc-1} of this case is
\begin{equation}
\phi=\mathcal{A}\mathcal{Q}^{+}+  \mathcal{B}\mathcal{Q}^{-}, ~~ \mathcal{A}, \mathcal{B}\in \t G_N,
\label{gen-sol-case 2}
\end{equation}
where
\begin{equation}
\mathcal{Q}^{\pm}=(\mathcal{Q}^{\pm}_{0},\mathcal{Q}^{\pm}_{1},\cdots,\mathcal{Q}^{\pm}_{N-1})^T,
~~
\mathcal{Q}^{\pm}_{j}=\frac{1}{j!}\partial^{j}_{k_1}\phi^{\pm}_1,
\label{Q-pm-k}
\end{equation}
with $\phi_1^{+}=a^{+}_{1}\cosh {\xi_1},~\phi_1^{-}= a^{-}_{1}\sinh {\xi_1}$,
or $\phi^{\pm}_1=b^{\pm}_{1}e^{\pm\xi_1}$.

One can verify that $\mathcal{Q}^{\pm}$ are two independent solutions of the LDES \eqref{wc-1}.
Since \eqref{wc-1} is linear and LTTMs $\mathcal{A}$ and $\mathcal{B}$ contain enough $2N$ arbitrary constants,
\eqref{gen-sol-case 2} provides a general solution to \eqref{wc-1} when $A$ takes Jordan block \eqref{A-k-KdV}.
Due to the gauge property mentioned in Remark \ref{rem-3-1},
a significant form of \eqref{gen-sol-case 2} is
\begin{equation}
\phi=\mathcal{A}\mathcal{Q}^{+} +  \mathcal{Q}^{-}, ~\mathrm{or} ~~\phi=\mathcal{Q}^{+}+\mathcal{B}\mathcal{Q}^{-}, ~~ \mathcal{A},\mathcal{B}\in G_N.
\label{gen-sol-case 2'}
\end{equation}
Note also that one can take $a^{\pm}_{1}$ or $b^{\pm}_{1}$ to be functions of $k_1$,
but their contribution to $\mathcal{Q}^{\pm}_{j}$ through $\partial^{j}_{k_1}\phi^{\pm}_1$
can be balanced by LTTMs $\mathcal{A}$ and $\mathcal{B}$ in light of Lemma \ref{Lem-2-4-1}.

\vskip 5pt
\noindent
{\it Case 3} ~
\begin{equation}
A=J_{N}[0]
=\left(\begin{array}{cccccc}
0 & 0    & 0   & \cdots & 0   & 0 \\
1   & 0  & 0   & \cdots & 0   & 0 \\
%0   & 1  & 0   & \cdots & 0   & 0 \\
\cdots &\cdots &\cdots &\cdots &\cdots &\cdots \\
%0   & 0    & 0   & \cdots & k_1 & 0 \\
0   & 0    & 0   & \cdots & 1   & 0
\end{array}\right)_{N\times N}.
\label{nor-matrix-Case5}
\end{equation}
In this case, we get rational solutions to the KdV equation.
General solution of this case is given by
\begin{equation}
\phi=\mathcal{A}\mathcal{R}^{+}+ \mathcal{B} \mathcal{R}^{-},
~~\mathcal{A},\mathcal{B}\in \widetilde{G}_N£¬
\label{gen-sol-KdV-5}
\end{equation}
where
$\mathcal{R}^{\pm}=(\mathcal{R}^{\pm}_{0},\mathcal{R}^{\pm}_{1},\cdots,\mathcal{R}^{\pm}_{N-1})^T$
with
\begin{equation*}
R^{+}_{j}=\frac{1}{(2j)!}\Bigl [\frac{\partial^{2j}}{{\partial k_1}^{2j}}\cosh \xi_1\Bigr ]_{k_1=0},~~
R^{-}_{j}=\frac{1}{(2j+1)!}\Bigl [\frac{\partial^{2j+1}}{{\partial k_1}^{2j+1}}\sinh \xi_1\Bigr ]_{k_1=0}.
\end{equation*}
Significant form of \eqref{gen-sol-KdV-5} is
\begin{equation}
\phi=\mathcal{A}\mathcal{R}^{+} + \mathcal{R}^{-},~
\mathrm{or} ~~ \phi=\mathcal{R}^{+}+ \mathcal{B}\mathcal{R}^{-},
~~\mathcal{A}, \mathcal{B} \in {G}_N.
\end{equation}

\subsection{Notes}\label{sec-3-4}

In Freeman-Nimmo's proof \cite{FreN-PLA-1983} the coefficient matrix $A$ in the LDES \eqref{wc-1} takes a diagonal form.
This was generalised to a case of $A$ being Jordan form by a trick through taking derivatives for $\phi_1$ w.r.t. $k_1$ \cite{SirHR-PLA-1988}.
With the help of Theorem \ref{thm-2-3-1}, one can implement Wronskian verification starting from the LDES \eqref{wc-1}
with arbitrary $A$, as we have done in Sec.\ref{sec-3-1} for bilinear KdV equation and in Sec.\ref{sec-3-2} for bilinear BT.
Nimmo and Freeman also developed a procedure to get rational solutions in Wronskian form \cite{NimF-PLA-1983-RS}.
Although, as mentioned in \cite{NimF-PLA-1983-RS}, they failed to find more examples than the KdV equation,
their technique is indeed general and valid to a large group of integrable equations (cf.\cite{WuZ-JPA-2003}).
With regard to finding general solutions for the  LDES \eqref{wc-1} with a general $A$,
taking the advantage of the linearity of \eqref{wc-1} and gauge property of its solutions (see Remark \ref{rem-3-1}),
one only needs to consider $A$ being a diagonal or a Jordan block.
To find general solutions of Jordan block case,
\cite{MaY-TAMS-2005} employed variation of constants method of ordinary differential equations,
while here we make use of LTTMs, which is more explicit and convenient.

The name ``positons'' was introduced by Matveev \cite{Mat-PLA-1992-a,Mat-PLA-1992-b} for those  solutions
generated from \eqref{entry-Case2}.
Since when taking $u=0$ \eqref{entry-Case2} is a solution of the Schr\"odinger equation \eqref{KdV-Lax-a} with positive eigenvalue $\lambda=\kappa_j^2$,
the corresponding solutions of the KdV equation bear the name of ``positons''.
Along this line, solitons belong to ``negatons'' \cite{RasSK-JPA-1996},
and when $\lambda\in \mathbb{C}$ the solutions are named as ``complexitons'' \cite{MA-PLA-2002} (see also \cite{Jaw-PLA-1984}).
However, such a classification depends on nonlinear equations (see next section for the mKdV equation),
and it seems that all these solutions of the KdV equation, except solitons, are singular,
because by direct scattering analysis all discrete eigenvalues $\{\lambda_j\}$ have to be simple and negative.

Multiple-pole solutions can be explained as a special limit of simple-pole solutions \cite{MatS-book-1991,Zha-2006}.
To understand this, consider the following scaled Wronskian
\begin{equation}
\frac{W(\phi_1, \phi_2,\cdots,\phi_N)}{\prod^{N}_{j=2}(k_1-k_j)^{j-1}}
\label{limit}
\end{equation}
where $\phi_1=\phi_1(k_1)$ is given as \eqref{entry-Case1} and $\phi_j=\phi_1(k_j)$ for $j=2,3,\cdots,N$.
Implementing Taylor expansion successively for $\phi_j$, $j>1$, at $k_1$, \eqref{limit} turns out to be
$W(\phi_1, \partial_{k_1}\phi_1, \frac{1}{2!}\partial^2_{k_1}\phi_1, \cdots, \frac{1}{(N-1)!}\partial^{N-1}_{k_1}\phi_1)$,
which is a Wronskian for multiple-pole solutions.
Rational solutions can be obtained with a more elaborate procedure by taking $k_1\to 0$ after expansion.

\section{The mKdV equation}\label{sec-4}

\subsection{Wronskian solutions}\label{sec-4-1}

The the mKdV$^+$ (in the following the mKdV for short) equation
\begin{equation}
v_t+6v^2v_x + v_{xxx}=0
\label{mKdV}
\end{equation}
is another typical $(1+1)$-dimensional integrable equation, with a Lax pair
\begin{subequations}\label{mKdV-Lax}
\begin{align}
& \psi_{xx}+2 i v\psi_x=\lambda \psi, \label{mKdV-Lax-a}\\
& \psi_{t}=-4\psi_{xxx}+12 i v\psi_{xx} + (6i v_x-v^2)\psi_x.\label{mKdV-Lax-b}
\end{align}
\end{subequations}
%where $i$ is the imaginary unit.
It also has a Lax pair with the AKNS spectral problem
(see \eqref{akns-spectral} with $q=-r=v$).
This equation has solutions in Wronskian form as well, but its LDES contains complex operation (see \eqref{cond-mKdV}) which
makes troubles in obtaining general solutions.

Employing the transformation
\begin{equation}
v=i\biggl(\ln{\frac{f^{*}}{f}}\biggr)_x,
\label{trans-mKdV}
\end{equation}
the mKdV equation \eqref{mKdV} is bilinearized as (cf.\cite{Hir-JPSJ-1972-mKdV})
\begin{subequations}
\begin{eqnarray}
&& (D_{t}+D_{x}^{3}) f^* \cdot f=0,
\label{blinear-mKdV1}\\
&& D_{x}^{2}f^* \cdot f=0, \label{blinear-mKdV2}
\end{eqnarray}
\label{blinear-mKdV}
\end{subequations}
where $*$ stands for the complex conjugate.

\begin{theorem}
\label{thm-4-1-1}
The bilinear mKdV equation \eqref{blinear-mKdV} has a  Wronskian solution
\begin{equation}
f=|\widehat{N-1}|,
\label{wrons-mKdV}
\end{equation}
where the elementary column vector $\phi$ obeys the LDES
\begin{subequations}
\begin{align}
&\phi_{x}= B \phi^*,
\label{cond-a}\\
&\phi_{t}=-4\phi_{xxx},
\label{cond-b}
\end{align}
\label{cond-mKdV}
\end{subequations}
and $|B|\neq 0$ is required.
\end{theorem}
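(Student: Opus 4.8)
The plan is to mirror the Wronskian verification of the bilinear KdV equation (Theorem \ref{thm-3-1-1}), the crucial new ingredient being a reformulation of the complex conjugate $f^*$. Since \eqref{cond-a} reads $\phi_x=B\phi^*$ with $B$ a constant matrix and $|B|\neq 0$, I would first invert it to $\phi^*=B^{-1}\phi_x$ and note that, $B$ being constant, $(\phi^*)^{(j)}=B^{-1}\phi^{(j+1)}$ for every $j$. Consequently the conjugate Wronskian collapses to a once-shifted Wronskian of $\phi$ itself, $f^*=|B|^{-1}\,|1,2,\cdots,N|$. Because each term of the bilinear forms \eqref{blinear-mKdV1}--\eqref{blinear-mKdV2} carries exactly one factor $f^*$ and one factor $f$, the nonzero scalar $|B|^{-1}$ drops out, and the task reduces to verifying $D_x^2\,g\cdot f=0$ and $(D_t+D_x^3)\,g\cdot f=0$ for the two consecutive Wronskians $f=|\h{N-1}|=|0,1,\cdots,N-1|$ and $g=|1,2,\cdots,N|$.

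Second, I would record the second-order consequence of the LDES: differentiating \eqref{cond-a} once more and conjugating gives $\phi_{xx}=BB^*\phi$, i.e. $\phi$ obeys a KdV-type relation $\phi_{xx}=-A\phi$ with $A=-BB^*$. This is precisely the hypothesis needed to invoke Theorem \ref{thm-2-3-1} (with $\Omega_{ij}\equiv\partial_x^2$) and produce trace identities of the type \eqref{id-w-1-ap-1}, both for $f$ and for $g$, together with the cross relation $g\,(\mathrm{Tr}(A)f)=f\,(\mathrm{Tr}(A)g)$ exploited in the BT proof (Theorem \ref{thm-3-2-1}).

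For the quadratic equation \eqref{blinear-mKdV2} I would expand $g_{xx}f-2g_xf_x+gf_{xx}$ with the Wronskian derivative rules for the shifted and unshifted columns, eliminate the surplus ``non-Pl\"ucker'' terms by means of the trace identities above, and reduce the remainder to the Pl\"ucker relation \eqref{plu-r-1} for a suitable choice of $M$ and the four marked columns. I have checked the smallest case $N=2$: there the combination collapses, via \eqref{plu-r-1} together with the trace relation $\mathrm{Tr}(A)|0,1|=|1,2|-|0,3|$, exactly to zero, which confirms that the Pl\"ucker relation and the trace identity are genuinely needed in tandem. For the third-order/time equation \eqref{blinear-mKdV1} I would additionally use \eqref{cond-b}, $\phi_t=-4\phi_{xxx}$, to compute $f_t$ and $g_t$ as in the KdV calculation, then expand $g_tf-gf_t+g_{xxx}f-3g_{xx}f_x+3g_xf_{xx}-gf_{xxx}$ and reduce it likewise to \eqref{plu-r-1}.

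The main obstacle is the complex conjugation, which couples the whole structure through the single matrix $B$: one must verify that $B$ is constant so that $f^*$ really does become the clean shifted Wronskian $|1,2,\cdots,N|$, and that $\phi_{xx}=BB^*\phi$ is the correct bridge back to Theorem \ref{thm-2-3-1}. Once this is in place the difficulty becomes purely organisational bookkeeping; the time equation \eqref{blinear-mKdV1} is the most calculation-heavy, since its expansion involves columns shifted as far as $\phi^{(N+3)}$ and a larger number of cancellations before the Pl\"ucker relation can be applied.
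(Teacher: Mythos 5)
Your proposal is correct and follows essentially the same route as the paper's proof (given in the Appendix of the cited Rev.\ Math.\ Phys.\ paper and flagged in Remark \ref{rem-4-1}): convert $f^*$ into a constant multiple of a shifted Wronskian of $\phi$ using the invertibility of $B$, pass to $\phi_{xx}=BB^*\phi$ to invoke the trace identities of Theorem \ref{thm-2-3-1}, and reduce to the Pl\"ucker relation \eqref{plu-r-1}. The only (cosmetic) difference is the direction of the shift: the paper writes $f^*=|B^*|\,|-1,\h{N-2}|$ using an antiderivative column $(B^*)^{-1}\phi^*$, whereas you use the up-shifted $f^*=|B|^{-1}\,|1,2,\cdots,N|$ via $\phi^*=B^{-1}\phi_x$ --- both are equivalent and both pinpoint the same necessity of $|B|\neq 0$ noted in Remark \ref{rem-4-1}.
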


A proof can be found in the Appendix of \cite{ZhaZSZ-RMP-2014}.

\begin{remark}\label{rem-4-1}
In the proof of Theorem \ref{thm-4-1-1}, $f^*$ is written as
\begin{equation}
f^*=|B^*||-1,\h{N-2}|,
\end{equation}
which requires $|B|\neq 0$.
Therefore rational solutions of the mKdV equation cannot be derived from Theorem \ref{thm-4-1-1}.
Rational solutions will be considered separately in Sec.\ref{sec-4-3}.
\end{remark}

\begin{remark}\label{rem-4-2}
There is no gauge property for the LDES \eqref{cond-mKdV}.
Noting that both $\phi$ and $\phi^*$ are involved in  \eqref{cond-a}, for any
$\t B=PBP^{-1}$ which is similar to $B$, the new defined vector $\psi=P\phi$ does not satisfy $\psi_x=\t B \psi^*$.
Without gauge property we cannot construct general solutions  of the LDES \eqref{cond-mKdV}
and conduct a completed classification of solutions according to the canonical forms of $B$.
However, we do solve this problem if we make use of the LDES \eqref{wc-1} of the KdV equation and its gauge property.
See Theorem \ref{thm-4-2-1}.
\end{remark}

\subsection{Classification of solutions}\label{sec-4-2}

\subsubsection{Gauge property: revisit}\label{sec-4-2-1}

Consider \eqref{wc-1-x} in the LDES of the KdV equation and introduce $\mathbb{A}=P^{-1}A P$
which is similar to $A$ with a transform matrix $P$.
Defining $\varphi=P^{-1}\phi$ and $\mathbb{B}=P^{-1}B P^*$, from \eqref{cond-a} we have  $\varphi_{x}= \mathbb{B}\varphi^{*}$.
Obviously Wronskian $f(\phi)=|P|f(\varphi)$,
i.e. $\phi$ and $\varphi$ yield same solution for the mKdV equation through \eqref{trans-mKdV}.
We conclude the above analysis by the following Theorem.

\begin{theorem}
\label{thm-4-2-1}
The  Wronskian \eqref{wrons-mKdV} provides a sloution to the bilinear mKdV equation \eqref{blinear-mKdV}
where the elementary column vector $\varphi$ satisfies
\begin{subequations}
\label{mKdV-condition}
\begin{align}
&\varphi_{xx}=\mathbb{A}\varphi, \label{mKdV-condition-a}\\
&\varphi_{x}=\mathbb{B}\varphi^*, \label{mKdV-condition-b}\\
&\varphi_{t}=-4\varphi_{xxx},\label{mKdV-condition-c}
\end{align}
\end{subequations}
$|\mathbb{B}|\neq 0$ and
\begin{equation}
\mathbb{A}=\mathbb{B}\mathbb{B}^{*}. \label{cc}\\
\end{equation}
\end{theorem}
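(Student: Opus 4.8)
The plan is to reduce the statement to Theorem~\ref{thm-4-1-1} and then to supply the gauge covariance that gives the auxiliary relation \eqref{mKdV-condition-a} its purpose. First I would observe that the pair \eqref{mKdV-condition-b}--\eqref{mKdV-condition-c} is literally the LDES \eqref{cond-mKdV} with $B$ replaced by $\mathbb{B}$. Since $|\mathbb{B}|\neq 0$ is assumed, Theorem~\ref{thm-4-1-1} applies verbatim with $\phi=\varphi$ and $B=\mathbb{B}$, and already guarantees that $f=|\widehat{N-1}|$ solves the bilinear mKdV equation \eqref{blinear-mKdV}. Thus the verification part of the theorem needs nothing new.

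Next I would check that \eqref{mKdV-condition-a} together with \eqref{cc} is not an independent constraint but a differential consequence of \eqref{mKdV-condition-b}. Since the star denotes entrywise complex conjugation, differentiation in the real variable $x$ commutes with it, so conjugating \eqref{mKdV-condition-b} gives $\varphi^*_x=(\varphi_x)^*=\mathbb{B}^*\varphi$. Differentiating \eqref{mKdV-condition-b} once more and substituting, I obtain
\begin{equation*}
\varphi_{xx}=\mathbb{B}\varphi^*_x=\mathbb{B}\mathbb{B}^*\varphi,
\end{equation*}
so that setting $\mathbb{A}=\mathbb{B}\mathbb{B}^*$ reproduces \eqref{mKdV-condition-a}. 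This shows the three conditions are mutually compatible and explains why \eqref{cc} is forced rather than imposed.

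The substantive content, anticipated by the discussion preceding the theorem, is the gauge covariance. Starting from a $\phi$ obeying \eqref{cond-mKdV} with matrix $B$, I would set $A=BB^*$, so that $\phi$ also satisfies the KdV-type relation $\phi_{xx}=A\phi$. For any constant invertible $P$ I then define $\varphi=P^{-1}\phi$, $\mathbb{A}=P^{-1}AP$ and $\mathbb{B}=P^{-1}BP^*$; a short computation gives $\varphi_x=\mathbb{B}\varphi^*$, $\varphi_t=-4\varphi_{xxx}$ and, crucially,
\begin{equation*}
\mathbb{B}\mathbb{B}^*=P^{-1}BP^*(P^*)^{-1}B^*P=P^{-1}(BB^*)P=\mathbb{A}.
\end{equation*}
Because $P$ is constant we have $f(\phi)=|P|\,f(\varphi)$, and the bilinear equations \eqref{blinear-mKdV} are invariant under a rescaling $f\mapsto cf$ by a nonzero constant (each form scales by $|c|^2$) while $v=i(\ln f^*/f)_x$ in \eqref{trans-mKdV} is unchanged by such a factor; hence $f(\varphi)$ is again a solution. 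As $P$ ranges over the invertible matrices, $\mathbb{A}=P^{-1}AP$ sweeps out the whole similarity class of $A$, so one may always arrange $\mathbb{A}$ to be in canonical (Jordan) form, which is precisely what enables the classification promised in the sequel.

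I expect the main obstacle to be exactly the point flagged in Remark~\ref{rem-4-2}: the map $B\mapsto\mathbb{B}=P^{-1}BP^*$ is \emph{not} a similarity transformation, because the conjugation in \eqref{mKdV-condition-b} forces a factor $P^*$ rather than $P^{-1}$, so $\mathbb{B}$ itself cannot be canonicalized. The resolution, and the technical heart of the argument, is the displayed identity showing that the product $\mathbb{B}\mathbb{B}^*$ nevertheless transforms by an honest similarity; it uses only $(P^{-1})^*=(P^*)^{-1}$ and $P^*(P^*)^{-1}=I$, yet it is what lets the well-behaved KdV gauge structure of Remark~\ref{rem-3-1} be transported to the mKdV setting despite the absence of a direct gauge property for \eqref{cond-mKdV}.
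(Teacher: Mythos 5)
Your proposal is correct and takes essentially the same route as the paper: the gauge computation in Sec.\ref{sec-4-2-1} ($\varphi=P^{-1}\phi$, $\mathbb{B}=P^{-1}BP^{*}$, $\mathbb{B}\mathbb{B}^{*}=P^{-1}(BB^{*})P=\mathbb{A}$) is exactly your third paragraph, and the verification itself rests, as in the paper, on Theorem \ref{thm-4-1-1}. Your additional observations --- that \eqref{mKdV-condition-b}--\eqref{mKdV-condition-c} are precisely \eqref{cond-mKdV} with $B=\mathbb{B}$ (the case $P=I$, so Theorem \ref{thm-4-1-1} applies verbatim), and that \eqref{mKdV-condition-a} with \eqref{cc} is a differential consequence of \eqref{mKdV-condition-b} --- merely make explicit what the paper leaves implicit.
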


With this Theorem, and noting the fact that the eigenvalues of $\mathbb{A}$ defined by \eqref{cc} are either real or appear as conjugate pairs
if there are any complex ones \cite{ZhaZSZ-RMP-2014},
one can construct general solutions to the LDES \eqref{mKdV-condition} and implement a full classification
of solutions according to the canonical forms of $\mathbb{A}$ instead of $\mathbb{B}$.

\subsubsection{Solitons}\label{sec-4-2-2}

It can be proved that the case of $\mathbb{A}$ containing $N$ negative eigenvalues yields
only trivial solutions of the mKdV equation \eqref{mKdV} \cite{ZhaZSZ-RMP-2014}.
In the following we consider the case
\begin{equation}
\mathbb{A}={\rm Diag}( \lambda_{1}^2,  \lambda_{2}^2, ~ \cdots, ~\lambda_{N}^2),
\label{mathbb-A-11}
\end{equation}
where, without loss of generality,
we let  $\lambda_j=\varepsilon_j ||k_j||\neq 0$, in which $\varepsilon_j=\pm 1$ and $k_j$ can be either real or complex numbers
with distinct absolute values (modulus).

When $k_{j}\in \mathbb{R}$, i.e. $\lambda_j^2=k_j^2$ in \eqref{mathbb-A-11},
$\mathbb{B}$ takes a form
\begin{equation}
\mathbb{B}={\rm Diag}(k_{1}, k_{2}, ~ \cdots, ~ k_{N}),
\label{BB}
\end{equation}
where $k_j\in \mathbb{R}$.
A solution to the LDES \eqref{mKdV-condition} is $\varphi=(\varphi_{1},  \varphi_{2}, \cdots,  \varphi_{N})^{T}$
where
\begin{equation}
\varphi^{}_{j}= a_{j}^+  e^{\xi_{j}}+ i a_{j}^-
 e^{-\xi_{j}}, ~\xi_{j}=k_{j}x-4k_{j}^{3}t+\xi_{j}^{(0)},~~
 a_{j}^+ , a_{j}^-, k_j, \xi_{j}^{(0)} \in \mathbb{R}.
\label{mkdv-c11}
\end{equation}
Particularly, when $a^+_j=(-1)^{j-1},~a^-_j\equiv 1$,
the Wronskian solution  $f(\varphi)$ can be written as \cite{ZDJ-JPSJ-2002-mkdvscs}
\begin{equation*}
f= K\cdot \biggl (\prod^{N}_{j=1}e^{\xi_j}\biggr)
    \sum_{\mu=0, 1}\exp \biggl\{\sum_{j=1}^{N}\mu_{j}(2\eta_{j}+\frac{\pi}{2}i)+\sum_{1\leq j<l\leq N}
   \mu_{j}\mu_{l}a_{jl} \biggr\},
\end{equation*}
where the sum over $\mu=0,1$ refers to each of $\mu_j=0,1$ for $j=1,2,\cdots, N$, and
\begin{equation*}
\eta_{j}=-\xi_{j}-\frac{1}{4}\sum_{l=2,l\neq j}^{N}a_{jl},
~~e^{a_{jl}}=\biggl(\frac{k_l-k_j}{k_l+k_j}\biggr)^2,
~~K= \prod_{1\leq j<l\leq N}(k_{j}-k_{l}).
\end{equation*}
This coincides with the $N$-soliton solution in Hirota's  form \cite{AblS-1981}.

When $k_j=k_{j1}+ik_{j 2} \in \mathbb{C}$, i.e. $\lambda_{j}^2=k_{j1}^2+k_{j2}^2$ in \eqref{mathbb-A-11},
$\mathbb{B}$ is taken as \eqref{BB} but with $k_j \in \mathbb{C}$,
and the  solution $\varphi$ of  \eqref{mKdV-condition} is composed by
\begin{subequations}
\begin{equation}
 \varphi_j= \gamma_j(a_{j}^+  e^{\xi_{j}}+ i a_{j}^-
e^{-\xi_{j}}), ~\xi_{j}=\lambda_{j}x-4\lambda_{j}^{3}t+\xi_{j}^{(0)}, ~~
a_{j}^+ , a_{j}^-,~ \xi_{j}^{(0)} \in \mathbb{R},
\label{mkdv-c12}
\end{equation}
where
\begin{eqnarray}
\gamma_j=1+\frac{i(\lambda_j-k_{j1})}{k_{j2}},~~
\lambda_j=\varepsilon_j ||k_j||.
\end{eqnarray}
\end{subequations}
However, this yields a same solution to the mKdV equation as \eqref{mkdv-c11} does.
Note that the solution obtained in \cite{NimF-JPA-1984-BT} by Nimmo and Freeman
is the case $k_j=ik_{j 2}$.

With the above discussions we come to the following Remark.

\begin{remark}\label{rem-4-3}
No matter whether the diagonal matrix $\mathbb{B}$  \eqref{BB} is real or complex,
the related Wronskian $f(\varphi)$ ALWAYS generates $N$-soliton solutions to the mKdV equation
when all $\{k_j\}$ have distinct absolute values, and each soliton is identified by $\lambda_j=||k_j||$.
In this sense, unlike the KdV equation, there is no positon-negaton-complexiton classification
for the mKdV equation \eqref{mKdV}.
Define an equivalent relation $\sim$ on the complex plane $\mathbb{C}$ by
\begin{equation}
k_i\sim k_j,~ \mathrm{iff} ~ ||k_i||= ||k_j||.
\label{sim}
\end{equation}
Then the  quotient space $\mathbb{C}/\sim$  denotes the positive half real axis (or positive half imaginary axis),
on which we choose distinct $\{k_j\}$ for solitons.
\end{remark}

\subsubsection{Limit solutions of solitons}\label{sec-4-2-3}

This is the case in which both $\mathbb{A}$ and $\mathbb{B}$ are LTTMs.
To find a general solution of this case, we consider $\mathbb{A}=\t J_N[k_1]$ defined  as \eqref{A-k-KdV} with $k_1\in \mathbb{R}$.
Thus we can make use of \eqref{gen-sol-case 2} and write
\begin{equation}
\varphi=\mathcal{A}\mathcal{Q}^{+}+ \mathcal{B} \mathcal{Q}^{-},~~~
\mathcal{A}, \mathcal{B} \in \widetilde{G}_N,
\label{gen-sol-L-1}
\end{equation}
which solves \eqref{mKdV-condition-a} and \eqref{mKdV-condition-c},
where
\begin{equation}
\label{mathbb-Q}
 \mathcal{Q}^{\pm}=(\mathcal{Q}^{\pm}_{0}, \mathcal{Q}^{\pm}_{1}, \cdots, \mathcal{Q}^{\pm}_{N-1})^T,~~
\mathcal{Q}^{\pm}_{s}=\frac{1}{s!}\partial^{s}_{k_1}e^{\pm \xi_1},
\end{equation}
and $\xi_1$ is defined in \eqref{mkdv-c11}.
The matrix $\mathbb{B}$ that satisfies \eqref{cc} is a standard Jordan block
\begin{equation}
\mathbb{B}=J_N[k_1]\doteq\left(\begin{array}{cccccc}
k_1 & 0    & 0   & \cdots & 0   & 0 \\
1   & k_1  & 0   & \cdots & 0   & 0 \\
\cdots &\cdots &\cdots &\cdots &\cdots &\cdots \\
0   & 0    & 0   & \cdots & 1   & k_1
\end{array}\right)_N.
\label{mathbb-B-21}
\end{equation}
In the following we impose extra conditions on $\mathcal{A}$ and $\mathcal{B}$
so that \eqref{gen-sol-L-1} solves \eqref{mKdV-condition-b} as well.
To do that, substituting \eqref{gen-sol-L-1} into  \eqref{mKdV-condition-b}, one has
\begin{equation}
\mathcal{A} \mathcal{Q}^{+}_{x}+\mathcal{B} \mathcal{Q}^{-}_{x}
=\mathbb{B}({\mathcal{A}}^* \mathcal{Q}^{+}+{\mathcal{B}}^*\mathcal{Q}^{-}).
\label{gen-sol-relate}
\end{equation}
Meanwhile, it can be verified that
$\mathcal{Q}^{\pm}_{0, x}=\pm \mathbb{B} \mathcal{Q}_{0}^{\pm}$,
Then, noting that $\mathbb{B}$ is real and $\mathbb{B}$, $\mathcal{A}$ and $\mathcal{B}$ are commutative,
it follows that
\begin{equation}
\mathbb{B}(\mathcal{A} \mathcal{Q}^{+}-\mathcal{B}\mathcal{Q}^{-})
=\mathbb{B}({\mathcal{A}}^* \mathcal{Q}^{+}+{\mathcal{B}}^*\mathcal{Q}^{-}).
\label{gen-sol-relate-new}
\end{equation}
Since $\mathcal{Q}^{+}$ and $\mathcal{Q}^{-}$ are linearly independent, this indicates that
$\mathcal{A}$ is real and $\mathcal{B}$ is pure imaginary.
Thus, a general solution to the LDES \eqref{mKdV-condition} of the mKdV equation is
\begin{equation}
{\varphi}=\mathcal{A^+} \mathcal{Q}^{+}+ i \mathcal{A^-}\mathcal{Q}^{-},
\label{gen-sol-21}
\end{equation}
where $\mathcal{A^\pm}$ are real LTTMs.

The case of $k_1\in \mathbb{C}$ contributes same solutions.
For more details of analysis of this case, one can refer to \cite{ZhaZSZ-RMP-2014}.
Note that we use subtitle ``limit solutions of solitons'' for this part
because the Wronskian $f$ composed by \eqref{gen-sol-21} can also be obtained by taking limits from solitons.
Of course, this is a case in which multiple-pole solutions are related to solitons.

\subsubsection{Breathers and limit breathers}\label{sec-4-2-4}

Breathers are obtained when $\mathbb{A}$ has $N$ distinct complex conjugate-pairs of eigenvalues, with a canonical form
\begin{equation}
\mathbb{A}={\rm Diag}( k_{1}^2,  k^{*2}_{1}, ~ \cdots, ~k_{N}^2, k^{*2}_{N}),~~k_{j}=k_{1j}+ik_{2j},~ k_{1j} k_{2j}\neq 0,
\label{mKdV-matrix-C4}
\end{equation}
and $\mathbb{B}$ takes a special block diagonal form
\begin{equation}
\mathbb{B}={\rm Diag}(\Theta_{1},  \Theta_{2}, \cdots, \Theta_{N}), ~~~ \Theta_{j}=\left(\begin{array}{cc}
0 & k_{j}  \\
k^*_{j} & 0  \\
\end{array}\right).
\label{B-breather}
\end{equation}
Solution to the LDES \eqref{mKdV-condition} is
\begin{subequations}
\label{breather}
\begin{equation}
{\varphi}=(\varphi_{11}, \varphi_{12}, \varphi_{21}, \varphi_{22},
\cdots, \varphi_{N1}, \varphi_{N2})^{T},
\end{equation}
where
\begin{align}
& \varphi_{j1}= a_{j}  e^{\xi_j}+b_{j} e^{-{\xi}_j}, ~~
\varphi_{j2}=a^*_{j}  e^{{\xi}^*_j} -b^*_{j}
e^{-{{\xi}^*_j}}, \\
& \xi_j=k_{j}x- 4 k_j^3 t+ \xi_{j}^{(0)},~~a_{j}, b_{j},\xi_{j}^{(0)}  \in \mathbb{C}.
\label{xi-breather}
\end{align}
\end{subequations}

Limit solutions of breathers are obtained when
 \begin{equation}
\mathbb{A}=\left(
\begin{array}{ccccccc}
  \mathcal {K} & 0 & 0 & \ldots & 0 & 0 & 0\\
 \widetilde{\mathcal {K}} &\mathcal {K} & 0 & \ldots & 0 & 0 & 0\\
 I_2& \widetilde{\mathcal {K}} &\mathcal {K}  & \ldots & 0 & 0 & 0\\
   \ldots &  \ldots &  \ldots &  \ldots &  \ldots &  \ldots&  \ldots\\
 0 & 0 & 0 & \ldots & I_2& \widetilde{\mathcal {K}} &\mathcal {K}
\end{array}\right)_{2N\times 2N},
\end{equation}
and
\begin{equation}
 \mathbb{B}=\left(
\begin{array}{ccccc}
  \Theta_{1} & 0 & \ldots & 0 & 0 \\
  \widetilde{I}_2 & \Theta_{1} & \ldots & 0 & 0 \\
   \ldots &  \ldots &  \ldots &  \ldots &  \ldots\\
  0 &  0 & \ldots &  \widetilde{I}_{2} & \Theta_{1}
\end{array}\right)_{2N\times 2N},
\end{equation}
where
\begin{equation*}
\mathcal {K} = \left(
\begin{array}{cc}
k^2_1&  0 \\
 0 & k_1^{*2}
\end{array}\right), ~~
\widetilde{\mathcal {K}} = \left(
\begin{array}{cc}
2k_1 &  0  \\
 0 & 2{k}^*_1
\end{array}\right),  ~~I_2=\left(
\begin{array}{cc}
1 &  0  \\
   0 & 1
\end{array}\right),~~
\widetilde{I}_{2} = \left(
\begin{array}{cc}
0 &  1  \\
 1 & 0
\end{array}\right).
\end{equation*}
In this case, a general solution of \eqref{mKdV-condition} is given through rewriting
$\varphi=({\varphi^{+}}^T, {\varphi^{-}}^T)^T$ where
$\varphi^{\pm}=(\varphi^{\pm}_1,\varphi^{\pm}_2,\cdots,\varphi^{\pm}_N)^T$ satisfy
\begin{subequations}
\label{entry-condition}
\begin{align}
&  \varphi _{xx}^ +  = \mathbb{A}'{\varphi ^ + }, ~~ \varphi _{xx}^ - =\mathbb{A}'^{*}  {\varphi ^ - },\label{entry-condition-A}\\
& \varphi _x^ +  = \mathbb{B}' \varphi^{-*}, ~~ \varphi_x^-  = \mathbb{B}'^*  \varphi^{+*}, \label{entry-condition-B} \\
& \varphi^{\pm}_{t}=-4\varphi^{\pm}_{xxx}, \label{entry-condition-t}
\end{align}
\end{subequations}
in which $\mathbb{A}'=\t J_N[k_1]$ and $\mathbb{B}=J_N[k_1]$
with $k_1\in \mathbb{C}$.
Explicit forms of $\varphi^{\pm}$ are
\begin{equation}
\label{gen-sol-mkdv-31-new}
\varphi^+ =  \mathcal {A}\mathcal{Q}^{+}+\mathcal{B}\mathcal {Q}^{-},~~
\varphi^-= \mathcal{A}^{*} \mathcal{Q}^{+*}- \mathcal{B}^{*} \mathcal{Q}^{-*},
~~ \mathcal{A}, \mathcal{B} \in \widetilde{G}_N.
\end{equation}

\subsection{Rational solutions}\label{sec-4-3}

Recall the Remark \ref{rem-4-1} that indicates the LDES \eqref{cond-mKdV} fails in generating rational solutions
for the mKdV equation \eqref{mKdV} due to $|B|\neq 0$.
To derive rational solutions, we make use of the Galilean transformation (GT)
\begin{equation}
v(x,t)=v_{0}+V(X,t), ~~x=X+6{v_0}^{2}t, ~~v_0\in \mathbb{R},~ v_0\neq 0,
\label{GT}
\end{equation}
and consider the transformed equation (known also as the KdV-mKdV equation)
\begin{equation}
{V_t}+ 12{v_0}V{V_X} + 6{V^2}{V_X} + {V_{XXX}} = 0,
\label{kdv-mkdv-1}
\end{equation}
which admits rational solutions in Wronskian form. Once its rational solutions is obtained, one can reverse the GT \eqref{GT}
and get rational solutions of the mKdV equation \eqref{mKdV}.

By the transformation
\begin{equation}
V=i\,\biggl( \ln\frac{f^*}{f}\biggr)_X,
\label{trans-kdv-mkdv}
\end{equation}
\eqref{kdv-mkdv-1} is bilinearised as \cite{Wadati-JPSJ-1975}
\begin{subequations}
\begin{align}
& (D_t+D_{X}^{3})f^*\cdot f =0,\label{eq4a}
\\
& (D_{X}^{2}-2iv_{0}D_X)f^*\cdot f =0.\label{eq4b}
\end{align}
\label{bilinear-kdv-mkdv}
\end{subequations}

\begin{theorem}\label{thm-4-3-1}\cite{ZhaZSZ-RMP-2014}
The bilinear equation \eqref{bilinear-kdv-mkdv} admits Wronskian solution $f=|\h{N-1}|$,
where the elementary column vector $\varphi$ is determined by
\begin{subequations}
\label{cond1}
\begin{align}
& i\varphi _{X} =v_{0}\varphi +\mathbb{B} \varphi^*, \label{eq97}\\
& \varphi _{t} =-4\varphi_{XXX}.\label{eq96}
\end{align}
\end{subequations}
\end{theorem}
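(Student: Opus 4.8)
The plan is to verify the two bilinear equations \eqref{eq4a} and \eqref{eq4b} directly in the Freeman--Nimmo style, exactly as in the proof of Theorem~\ref{thm-3-1-1}, and to handle the genuinely mixed product $f^*\cdot f$ by using the first-order constraint \eqref{eq97} as the bridge between the two column vectors $\varphi$ and $\varphi^*$. First I would extract the second-order consequence of \eqref{cond1}: differentiating \eqref{eq97} once in $X$ and eliminating $\varphi^*_X$ through the complex conjugate of \eqref{eq97} yields
\[
\varphi_{XX}=(\mathbb{B}\mathbb{B}^*-v_0^2)\varphi=:\mathbb{A}\varphi,\qquad
\varphi^*_{XX}=(\mathbb{B}^*\mathbb{B}-v_0^2)\varphi^*=\mathbb{A}^*\varphi^*,
\]
so that $\varphi$ (resp. $\varphi^*$) obeys a KdV-type system of the form \eqref{wc-1} with coefficient matrix $-\mathbb{A}$ (resp. $-\mathbb{A}^*$), together with the same temporal flow $\varphi_t=-4\varphi_{XXX}$ coming from \eqref{eq96}. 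This at once makes the Wronskian derivative formulas of Sec.~\ref{sec-2-2} and the trace identity \eqref{id-w-1-ap-1} (Theorem~\ref{thm-2-3-1} with $\Omega_{ij}\equiv\partial_X^2$) available for both $f$ and $f^*$.

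Next I would eliminate $\varphi^*$ in favour of $\varphi$. In the relevant (rational) regime one has $|\mathbb{B}|\neq0$: requiring $\mathbb{A}$ to be nilpotent forces $\mathbb{B}\mathbb{B}^*$ to have the single nonzero eigenvalue $v_0^2$, hence $\mathbb{B}$ invertible, which is exactly why the Galilean shift \eqref{GT} was introduced. Then \eqref{eq97} reads $\varphi^*=\mathbb{B}^{-1}(i\varphi_X-v_0\varphi)$, so $\varphi^{*(j)}=\mathbb{B}^{-1}(i\varphi^{(j+1)}-v_0\varphi^{(j)})$. Factoring $\mathbb{B}^{-1}$ out of every column expresses $f^*$ as $|\mathbb{B}^{-1}|$ times a determinant built entirely from $\varphi$-columns; a multilinear column expansion collapses it, by the distinct-derivative-order constraint, into a $v_0$-weighted combination of the shifted Wronskians $|0,1,\dots,m-1,m+1,\dots,N|$ for $m=0,\dots,N$ (with $m=N$ returning $(-v_0)^N f$). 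Since every term of \eqref{eq4a} and \eqref{eq4b} carries exactly one factor $f^*$, the common prefactor $|\mathbb{B}^{-1}|$ divides out and all determinants come to live in the single column space of $\varphi$, where the Pl\"ucker relation applies. (Being a polynomial identity in the data, the result then extends to the remaining $\mathbb{B}$ by continuity from the dense set $|\mathbb{B}|\neq0$.)

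I would then attack the lower-order equation \eqref{eq4b} first. Writing $(D_X^2-2iv_0D_X)f^*\cdot f=f^*_{XX}f-2f^*_Xf_X+f^*f_{XX}-2iv_0(f^*_Xf-f^*f_X)$ and inserting the Wronskian derivatives of $f$ together with the shifted-Wronskian expansion of $f^*$ and its $X$-derivatives, I expect the trace identity \eqref{id-w-1-ap-1} (and the analogue generated from $\mathrm{Tr}(\mathbb{A})(\mathrm{Tr}(\mathbb{A})f)$) to cancel the $\mathbb{A}$-dependent pieces, while the surviving terms assemble into the three-term Pl\"ucker identity \eqref{plu-r-1} for a suitable block $M$ and four distinguished columns, precisely as at the end of the proof of Theorem~\ref{thm-3-1-1}. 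The higher-order equation \eqref{eq4a}, namely $f^*_tf-f^*f_t+f^*_{XXX}f-3f^*_{XX}f_X+3f^*_Xf_{XX}-f^*f_{XXX}$, is then treated the same way after using $\varphi_t=-4\varphi_{XXX}$.

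The hard part will be the bookkeeping in the second and third steps rather than any single identity: tracking the $v_0$-weighted shifted Wronskians produced by $f^*$ and checking that all $v_0$-carrying contributions cancel in pairs is delicate, and it is exactly here that the precise coefficient $-2iv_0$ in \eqref{eq4b} and the sign convention in \eqref{eq97} are forced. The conceptual crux, as opposed to the computational one, is the use of \eqref{eq97} to render the two a priori unrelated column spaces of $\varphi$ and $\varphi^*$ compatible; without this bridge the mixed product $f^*\cdot f$ cannot be funneled into a single Pl\"ucker relation. A fully detailed version of this computation can be found in \cite{ZhaZSZ-RMP-2014}.
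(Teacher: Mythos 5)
Your strategy is essentially the right one, with the caveat that this paper itself does not prove Theorem \ref{thm-4-3-1} but defers to \cite{ZhaZSZ-RMP-2014}; judged against the scaffolding the paper does display, your reconstruction matches it closely. Your auxiliary matrix $\mathbb{A}=\mathbb{B}\mathbb{B}^*-v_0^2I_N$ with $\varphi_{XX}=\mathbb{A}\varphi$ is literally the paper's \eqref{A-aux} (mind the sign flip relative to \eqref{wc-1}, so \eqref{id-w-1-ap-1} is used with $A=-\mathbb{A}$, as you note), and your key move --- using \eqref{eq97} with $|\mathbb{B}|\neq0$ to fold the $\varphi^*$-columns into the $\varphi$-column space so that the mixed products in \eqref{eq4a}--\eqref{eq4b} live in one Grassmannian where \eqref{plu-r-1} applies --- is exactly the device of Remark \ref{rem-4-1}; your observation that the Galilean shift \eqref{GT} restores invertibility of $\mathbb{B}$ is also correct (the paper's rational $\mathbb{B}$ has $a_0=-\sqrt{v_0^2}\neq0$). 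Where you genuinely diverge is the representation of $f^*$: you expand it multilinearly into the $N+1$ shifted Wronskians $\sum_{m=0}^{N}(-v_0)^m i^{N-m}|0,\dots,m-1,m+1,\dots,N|$, whereas the route indicated by Remark \ref{rem-4-1} (and followed in \cite{ZhaZSZ-RMP-2014}) compresses $f^*$ into a single determinant with one extra column: column operations on the conjugate of \eqref{eq97} give $f^*=i^{N-1}|\mathbb{B}^*|\,|{-1},\h{N-2}|$ with $\varphi^{(-1)}:=(\mathbb{B}^*)^{-1}\varphi^*$ obeying $\partial_X\varphi^{(-1)}=iv_0\varphi^{(-1)}+i\varphi$, so derivatives of $f^*$ stay in compact quasi-Wronskian form and the $v_0$-bookkeeping you rightly flag as the hard part largely disappears. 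Your expansion is workable in principle (all objects are maximal minors of one matrix, whose quadratic relations are generated by Pl\"ucker relations), but it multiplies the determinant species roughly $(N+1)$-fold, and you have not exhibited the cancellations for either bilinear equation, so as written it is a credible plan rather than a completed verification. One small repair: the extension to singular $\mathbb{B}$ is not a ``polynomial identity in the data'', since $\varphi$ depends on $\mathbb{B}$ through conjugation; what saves it is that \eqref{eq97} is a well-posed linear system in $X$ whose constraint is preserved by \eqref{eq96} (the residual $R=i\varphi_X-v_0\varphi-\mathbb{B}\varphi^*$ satisfies $R_t=-4R_{XXX}$), so solutions depend real-analytically on $\mathbb{B}$ and the identities extend by continuity from the dense open set $|\mathbb{B}|\neq0$.
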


Note that solutions of the above system can be classified as in Sec.\ref{sec-4-2} by introducing
\begin{equation}\label{A-aux}
\varphi_{XX}=\mathbb{A}\varphi,
\end{equation}
where $\mathbb{A}=\mathbb{B} \mathbb{B}^{*}-v_0^{2}I_N$ and $I_N$ is the $N$th-order unit matrix.

$N$-soliton solution is obtained when taking
\begin{equation}
\mathbb{B}=\mathrm{diag}(-\sqrt{v_{0}^{2}+k_{1}^{2}},-\sqrt{v_{0}^{2}+k_{2}^{2}},\cdots,
-\sqrt{v_{0}^{2}+k_{N}^{2}}),
\end{equation}
and $\varphi$ satisfying \eqref{cond1} is composed by
\begin{equation}
\varphi_{j}=\sqrt{2v_{0}+2ik_{j}}\,e^{\eta_j}+\sqrt{2v_{0}-2ik_{j}}\,
e^{-\eta_j}, ~~\eta_j=k_{j}X-4k_{j}^{3}t, \label{phi-j}
\end{equation}
where $\{k_j\}$ are $N$ distinct real positive numbers.

To obtain rational solutions, we take $\mathbb{B}$ to be defined as \eqref{A-LTT} with
\begin{equation*}
a_{j}=\frac{-1}{(2j)!}\frac{\partial ^{2j} }{{\partial
k_{1}}^{2j}}\sqrt{v_{0}^{2}+k_{1}^{2}}\,\Bigr|_{k_{1}=0}.
\end{equation*}
In this case $\varphi_j$ is composed by
\begin{equation}
\varphi_{j+1}= \frac{1}{(2j)!}\frac{\partial ^{2j} }{{\partial
k_1}^{2j}}\varphi_{1}\,\bigr|_{k_1=0},~~(j=0,1,\cdots,N-1),
\end{equation}
where $\varphi_1$ is defined as in \eqref{phi-j}.

Write the corresponding Wronskian  as
$f=F_1+iF_2$, where $F_1=\mathrm{Re}[f],~ F_2=\mathrm{Im}[f]$.
From \eqref{trans-kdv-mkdv} solutions of the KdV-mKdV equation \eqref{kdv-mkdv-1} are expressed as
\begin{equation}
V(X,t)=-2 \,\frac{ F_{1,X}F_2-F_1F_{2,X}}{F_1^2+F_2^2},
\end{equation}
and for the mKdV equation \eqref{mKdV},
\begin{equation}
v(x,t)=v_{0}-\frac{2(F_{1,x}F_2-F_1F_{2,x})}{F_1^2+F_2^2},
\label{rat}
\end{equation}
where one needs to replace $X$ in $F_j$ with $~~ X=x-6v_{0}^{2}t$.
The simplest rational solution of the mKdV equation \eqref{mKdV} is
\begin{equation}
v =v_{0}-\frac{4v_{0}}{4v_{0}^{2}(x-6v_{0}^{2}t)^{2}+1}.
\label{1rs}
\end{equation}

\subsection{Notes}\label{sec-4-4}

Unlike the KdV equation, there is no positon-negaton-complexiton classification for the solutions of the mKdV equation
according to the eigenvalues of $\mathbb{B}$ in the LDES \eqref{mKdV-condition}.
In the simple-pole case, one can choose distinct $\{k_j\}$ from $\mathbb{C}/\sim$ to get solitons
where $\sim$ is defined by \eqref{sim}.

Analysis of direct scattering of the mKdV equation \eqref{mKdV} shows that the transmission coefficient $T(k)$ can have multiple poles \cite{WadO-JPSJ-82}.
Therefore multiple-pole solutions (limit solutions of solitons in this paper) are not singular.
Note that a typical characteristic of double pole solutions is that
at large time the two solitons asymptotically travel along logarithmic trajectories with a linear background.
This is also true for limit breathers \cite{ZhaZSZ-RMP-2014}.
Such a typical behavior was probably first found by Zakharov and Shabat
for double pole solution of the nonlinear Schr\"odinger (NLS) equation \cite{ZakS-JETP-1972}.
For more details of strict asymptotic analysis of this type of solutions one can refer to \cite{ZhaZSZ-RMP-2014,ZhoZZ-PLA-2009}.

Note specially that the mKdV equation \eqref{mKdV} serves as an integrable model that describes by its breathers the ultra-short pulse
propagation in a medium described by a two-level Hamiltonian \cite{LebM-PR-2013}.
Mixed solutions of the mKdV equation is potentially used to generate rogue waves \cite{SluP-PRL-2016}.
Such solutions correspond to the case that $\mathbb{B}$ in \eqref{mKdV-condition} or \eqref{cond1}
is block diagonal with diagonal and different Jordan cells, and the elementary column vector $\varphi$ is composed accordingly.

\section{The AKNS and reductions}\label{sec-5}

\subsection{The AKNS hierarchy and double Wronskian solutions}\label{sec-5-1}

A typical equation to admit double Wronskian solutions is the NLS equation \cite{Nimmo-PLA-1983-NLS},
which belongs to the well known AKNS hierarchy.
Let us recall some results of this hierarchy.

The AKNS hierarchy
\begin{equation}\label{akns-hierarchy}
 u_{t_n} =K_n     = \left(
                      \begin{array}{c}
                        K_{1,n} \\
                        K_{2,n}
                      \end{array}
                    \right)
                    =L^n \left(
                      \begin{array}{c}
                        -q \\
                        r
                      \end{array}
                    \right),
 ~~ u=\left(
   \begin{array}{c}
     q  \\
     r
   \end{array}
 \right),
 ~~n=1,2,\cdots,
\end{equation}
is derived from the AKNS spectral problem \cite{AKNS-PRL-1973}
\begin{equation}\label{akns-spectral}
 \Phi_x =\left( \begin{array}{cc}
                                  \lambda & q \\
                                  r & -\lambda
                                \end{array}
                              \right)  \Phi,~~\Phi=\left(
                                \begin{array}{c}
                                  \phi_1 \\
                                  \phi_2
                                \end{array}
                              \right),
\end{equation}
where $q$ and $r$ are functions of $(x,t)$, $\lambda$ is the spectral parameter, and the recursion operator is
\begin{equation*}
 L= \left(
            \begin{array}{ll}
              -\partial_x+2q \partial_x^{-1}r & 2q \partial_x^{-1}q \\
              -2r \partial_x^{-1}r & \partial_x- 2r \partial_x^{-1}q
            \end{array}
          \right).
\end{equation*}
By imposing
\begin{equation}\label{nls-hie-reduction1}
 r(x,t)= \delta q^*(\sigma x,t),~~\delta,\sigma=\pm 1
\end{equation}
on the even-indexed members of the AKNS hierarchy
\begin{equation}
iu_{t_{2l}}=-K_{2l},~~ l=1,2,\cdots,
\label{akns-hie-even}
\end{equation}
where we have replaced $t_{2l}$ by $it_{2l}$, one gets the NLS hierarchy
\begin{equation}
i q_{t_{2l}}= -K_{1,2l}|_{\eqref{nls-hie-reduction1}},~~ l=1,2,\cdots.
\label{nls-hie1}
\end{equation}
Note that  $(\sigma,\delta)=(1,\mp 1)$ yields the classical focusing and defocusing NLS hierarchy,
while  $\sigma=-1$ yields the so-called nonlocal NLS hierarchy (cf.\cite{AblM-PRL-2013}).
There are more reductions (cf.\cite{CheDLZ-SPAM-2018}), while here let us only focus on the
NLS hierarchy as an instructive example on double Wronskians and reductions.

After rewriting the AKNS hierarchy \eqref{akns-hierarchy} as its recursive form
\begin{equation}\label{akns-hierarchy-2}
 \left(
   \begin{array}{c}
     q_{t_{n+1}} \\
     r_{t_{n+1}} \\
   \end{array}
 \right)= L \left(
                      \begin{array}{c}
                        q_{t_{n}} \\
                        r_{t_{n}} \\
                      \end{array}
                    \right),~~n=1,2,\cdots
\end{equation}
and introducing  transformation
\begin{equation}\label{akns-transformation}
 q=\frac{h}{f},~~r=-\frac{g}{f},
\end{equation}
\eqref{akns-hierarchy-2} is bilinearised as (with $t_1=x$) \cite{Newell-book-1985}
\begin{subequations}\label{akns-bilibear}
\begin{align}
  &(D_{t_{n+1}}-D_{x}D_{t_n})g\cdot f =0, \label{akns-bilibear-1}  \\
  &(D_{t_{n+1}}-D_{x}D_{t_n})f\cdot h =0, \label{akns-bilibear-2}  \\
  &D^2_{x}f\cdot f =2gh.\label{akns-bilibear-3}
\end{align}
\end{subequations}

\begin{theorem}\label{thm-5-1}\cite{Liu-JPSJ-1990,YinSCC-CTP-2008}
The bilinear AKNS hierarchy \eqref{akns-bilibear} allows us the following double Wronskian solutions,
\begin{subequations}\label{akns-sol-fgh}
 \begin{align}
 & f= |\h{N-1};\h{M-1}|, \label{akns-f}  \\
 & g= 2^{N-M+1} |\h{N};\h{M-2}|, \label{akns-g}\\
 & h= 2^{M-N+1} |\h{N-2};\h{M}|, \label{akns-h}
 \end{align}
\end{subequations}
where the elementary column vectors \eqref{phipsi-NM} are  defined as
 \begin{equation}\label{akns-var-psi-AC}
  \varphi=\exp{\biggl(\frac{1}{2}\sum^{\infty}_{j=1} A^j t_j\biggr)}C^+, ~~\psi=\exp{\biggl( -\frac{1}{2}\sum^{\infty}_{j=1} A^j t_j\biggr)}C^-,
 \end{equation}
in which $A\in \mathbb{C}_{(N+M)\times (N+M)}$ is an arbitrary constant matrix and
\[C^{\pm}=(c_1^{\pm}, c_2^{\pm},\cdots, c_{N+M}^{\pm})^T,~~ c^{\pm}_i\in \mathbb{C}.\]
\end{theorem}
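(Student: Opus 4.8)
The plan is to verify the three bilinear equations \eqref{akns-bilibear-1}--\eqref{akns-bilibear-3} directly by substituting the double Wronskians \eqref{akns-sol-fgh} and reducing each to a Pl\"ucker relation, exactly mirroring the Wronskian verification of the bilinear KdV equation in Theorem \ref{thm-3-1-1}. The essential input is that the column vectors \eqref{akns-var-psi-AC} satisfy a clean set of dispersion relations. First I would record these: differentiating \eqref{akns-var-psi-AC} gives $\partial_{t_j}\varphi=\tfrac{1}{2}A^j\varphi$ and $\partial_{t_j}\psi=-\tfrac{1}{2}A^j\psi$, so with $t_1=x$ one has $\varphi_x=\tfrac{1}{2}A\varphi$, $\psi_x=-\tfrac{1}{2}A\psi$, and more generally each higher flow $\partial_{t_n}$ acts as $\pm\tfrac12 A^n$. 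The factor $\tfrac12$ is what produces the powers of $2$ in \eqref{akns-g}--\eqref{akns-h}, and the opposite signs on $\varphi$ and $\psi$ are what make the shift structure of the double Wronskian work.

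\textbf{Reducing the derivatives to shifted Wronskians.}
Next I would compute the various $x$- and $t_n$-derivatives of $f$, $g$, $h$ in the compact index notation $|\h{N-1};\h{M-1}|$. Because differentiating in $x$ raises the top index of the $\varphi$-block by one (and likewise for the $\psi$-block, up to the sign carried by $A$), the derivatives of a double Wronskian split into a small number of shifted double Wronskians, just as in Sec.~\ref{sec-2-2}. The role of Theorem \ref{thm-2-3-1} is to re-express the ``trace'' terms: applying the operator identity \eqref{id-w-2} with $\Omega\equiv\partial_x$ (and with $\Omega\equiv\partial_{t_n}$ for the higher flows) lets me trade an awkward combination of shifted Wronskians for $\mathrm{Tr}(A)$ times $f$, and analogously for $g$ and $h$. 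The aim of this bookkeeping is to rewrite each of the three bilinear expressions as a sum of products of double Wronskians that differ only in one or two replaced columns.

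\textbf{Closing with the Pl\"ucker relation.}
Once each bilinear equation is in the form of an alternating sum of products of determinants sharing a common block of columns, I would invoke Corollary \ref{cor-2-3-1}, in particular the three-term identity \eqref{plu-r-1}, with an appropriate choice of the shared matrix $M$ and the four distinguished columns $\mathbf{a},\mathbf{b},\mathbf{c},\mathbf{d}$ taken from the raised indices $\{N,N+1,M,M+1\}$ of the two blocks. Equation \eqref{akns-bilibear-3}, $D_x^2 f\cdot f=2gh$, is where the powers $2^{N-M+1}$ and $2^{M-N+1}$ in $g$ and $h$ must conspire to give exactly the factor $2$, so I would check that normalization carefully there.

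\textbf{Main obstacle.}
The hard part will be the recursive flows: equations \eqref{akns-bilibear-1}--\eqref{akns-bilibear-2} involve the mixed operator $D_{t_{n+1}}-D_xD_{t_n}$ for \emph{every} $n$ simultaneously, so I cannot simply fix one time variable. I expect the cleanest route is to exploit $\partial_{t_{n+1}}\varphi=\tfrac12 A^{n+1}\varphi=A\cdot\partial_{t_n}\varphi$ (and the mirror relation for $\psi$), which means the combination $D_{t_{n+1}}-D_xD_{t_n}$ collapses onto a single underlying shift by $A$ independent of $n$. Establishing that this collapse holds at the level of the shifted double Wronskians --- so that a single Pl\"ucker identity discharges the whole hierarchy at once --- is the step that needs the most care, and it is also what makes the result hierarchy-wide rather than equation-by-equation.
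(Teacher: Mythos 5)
Your overall plan---verify \eqref{akns-bilibear} directly from the dispersion relations $\partial_{t_j}\varphi=\tfrac12A^j\varphi$, $\partial_{t_j}\psi=-\tfrac12A^j\psi$, rewrite derivatives of $f,g,h$ as shifted double Wronskians, and close with the Pl\"ucker relation \eqref{plu-r-1}---is exactly the route the paper intends: its ``proof'' of Theorem \ref{thm-5-1} consists of the remark that the argument is similar to the single Wronskian case of Theorem \ref{thm-3-1-1}, with details deferred to \cite{Liu-JPSJ-1990,YinSCC-CTP-2008}. However, one step of your proposal would fail as written: the trace identity cannot be obtained by invoking Theorem \ref{thm-2-3-1} with $\Omega\equiv\partial_x$. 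With that choice both sides of \eqref{id-w-2} are just the column-wise and row-wise Leibniz expansions of $f_x$, and the row-wise side cannot be evaluated to $\mathrm{Tr}(A)f$: since $\varphi_x=\tfrac12A\varphi$ but $\psi_x=-\tfrac12A\psi$, differentiating row $i$ yields $\tfrac12\sum_k a_{ik}$ times a \emph{mixed-sign} row (the $\psi$-block entries carry an extra minus sign), which is not proportional to any row of the original matrix, so no terms collapse. The KdV computation worked precisely because the LDES \eqref{wc-1-x} is a single relation uniform across each whole row. The correct substitute here is the constant-matrix trace identity $\sum_j|\Xi_1,\cdots,A\Xi_j,\cdots,\Xi_{N+M}|=\mathrm{Tr}(A)|\Xi|$ combined with the column shift relations $A\varphi^{(j)}=2\varphi^{(j+1)}$ and $A\psi^{(j)}=-2\psi^{(j+1)}$, which gives
\begin{equation*}
2|\h{N-2},N;\h{M-1}|-2|\h{N-1};\h{M-2},M|=\mathrm{Tr}(A)\,|\h{N-1};\h{M-1}|,
\end{equation*}
and, together with $f_x=|\h{N-2},N;\h{M-1}|+|\h{N-1};\h{M-2},M|$, supplies exactly the bookkeeping you wanted (and similarly for $g$ and $h$).

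A second, smaller caveat concerns your closing heuristic. For general $n$ one has $\partial_{t_n}\varphi=2^{n-1}\partial_x^n\varphi$ and $\partial_{t_n}\psi=(-1)^{n+1}2^{n-1}\partial_x^n\psi$, so $\partial_{t_n}f$ raises column indices by $n$ and leaves up to $n$ surviving terms in each block, not a fixed quadruple of columns $\{N,N+1,M,M+1\}$; a single three-term Pl\"ucker identity therefore does not discharge the whole hierarchy at once. The cited proofs make your ``collapse'' rigorous by working recursively in $n$, relating the $t_{n+1}$-shifted Wronskians to the $t_n$-shifted ones through the factor $A$, and reducing each member of \eqref{akns-bilibear-1}--\eqref{akns-bilibear-2} to Pl\"ucker-type identities; your instinct about where the difficulty lies is correct, but this step needs the inductive structure spelled out rather than a one-shot identity. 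Your normalization check for \eqref{akns-bilibear-3}, where $2^{N-M+1}\cdot2^{M-N+1}=4$ must match the coefficient $2$, is the right thing to verify and works out.
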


The proof is similar to the single Wronskian case as of Theorem \ref{thm-3-1-1}.
One can also refer to \cite{YinSCC-CTP-2008} for more details.

Note that \eqref{akns-var-psi-AC} provides solutions for the whole AKNS hierarchy \eqref{akns-hierarchy-2}
as well as any special equation $u_{t_n}=K_n$ in that hierarchy due to the theory of symmetries.
In the later case, the terms $e^{\pm \frac{1}{2} \sum_{j\neq n} A^{j} t_{j}}$ in \eqref{akns-var-psi-AC} are absorbed into $C^{\pm}$.
For the  hierarchy \eqref{akns-hie-even}, $\varphi$ and $\psi$ are taken as (with $t_1=x$)
\begin{equation}\label{akns-var-psi-AC-even}
\varphi=\exp{\biggl(\frac{1}{2}Ax+\frac{i}{2}\sum^{\infty}_{j=1} A^{2j} t_{2j}\biggr)}C^+, ~~
\psi=\exp{\biggl( -\frac{1}{2}Ax-\frac{i}{2}\sum^{\infty}_{j=1} A^{2j} t_{2j}\biggr)}C^-.
\end{equation}

\subsection{Reductions of double Wronskians}\label{sec-5-2}

As we have seen \eqref{akns-var-psi-AC} provides solutions through double Wronskians $f, g, h$ for the unreduced hierarchy \eqref{akns-hierarchy}.
In the following we present a simple reduction procedure that enables us to obtain double Wronskian solutions
for the reduced hierarchies. Let us take \eqref{akns-hie-even} and  \eqref{nls-hie1} as an example.
Note that $ \sigma, \delta =\pm 1$.

\begin{theorem}\label{thm-5-2}

Consider double Wroskians \eqref{akns-sol-fgh} with $\varphi$ and $\psi$ defined by \eqref{akns-var-psi-AC-even},
which provides solutions to the unreduced hierarchy \eqref{akns-hie-even}.
Impose the following constraints on $\varphi$ and $\psi$:
taking $M=N$, and
\begin{equation}
C^-=TC^{+*},
\label{const-NLS-C}
\end{equation}
where the $2N\times 2N$ matrix $T$ obeys
\begin{subequations}\label{const-NLS}
\begin{align}
& AT+\sigma TA^*=0,\label{const-NLS-TA}\\
& TT^*=\delta \sigma I_{2N}.\label{const-NLS-TT}
\end{align}
\end{subequations}
Then the NLS  hierarchy \eqref{nls-hie1} has solution
\begin{equation}\label{NLS-q-sol}
 q(x,t)=2 \frac{|\h{N-2}; \h{N}|}{|\widehat{N-1}; \widehat{N-1}|},
\end{equation}
where
\begin{equation}
\psi(x,t)=T \varphi^*(\sigma x,t).
\label{const-NLS-psi}
\end{equation}
\end{theorem}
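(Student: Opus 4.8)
The plan is to reduce the statement to two facts: that the algebraic constraints \eqref{const-NLS-TA}--\eqref{const-NLS-TT} together with $C^-=TC^{+*}$ force the column-vector identity \eqref{const-NLS-psi}, and that this identity propagates through the double Wronskians $f,g,h$ so as to realise the AKNS-to-NLS constraint $r=\delta q^*(\sigma x,t)$ of \eqref{nls-hie-reduction1}. Once both hold, Theorem \ref{thm-5-1} (which guarantees that $q=h/f$, $r=-g/f$ solve the unreduced hierarchy) immediately yields the claimed solution \eqref{NLS-q-sol} of \eqref{nls-hie1}.

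First I would establish \eqref{const-NLS-psi}. Writing $\varphi=e^{E}C^+$ with $E=\tfrac12 Ax+\tfrac{i}{2}\sum_{j\ge1}A^{2j}t_{2j}$ the exponent in \eqref{akns-var-psi-AC-even}, I compute $\varphi^*(\sigma x,t)=e^{E^*(\sigma x,t)}C^{+*}$ and push $T$ through the exponential via $Te^{X}=e^{TXT^{-1}}T$. The intertwining consequence of \eqref{const-NLS-TA}, namely $TA^*T^{-1}=-\sigma A$, gives $T(A^*)^{2j}T^{-1}=A^{2j}$ for the even-time terms and $\tfrac12\sigma x\,TA^*T^{-1}=-\tfrac12 Ax$ for the $x$-term, so that $TE^*(\sigma x,t)T^{-1}=-E(x,t)$. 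With $C^-=TC^{+*}$ this collapses to $T\varphi^*(\sigma x,t)=e^{-E}C^-=\psi(x,t)$, which is exactly \eqref{const-NLS-psi}.

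Next I would convert \eqref{const-NLS-psi} into column relations for the Wronskian blocks. Differentiating it $k$ times, each $x$-derivative producing a factor $\sigma$ from the argument $\sigma x$, and using $T^{-1}=\delta\sigma T^*$ obtained from \eqref{const-NLS-TT}, I get $(\varphi^{(k)})^*(\sigma x)=\delta\sigma^{k+1}T^*\psi^{(k)}(x)$ and $(\psi^{(k)})^*(\sigma x)=\sigma^k T^*\varphi^{(k)}(x)$. Substituting these into the $M=N$ specialisation of \eqref{akns-sol-fgh}, i.e. $f=|\h{N-1};\h{N-1}|$, $g=2|\h{N};\h{N-2}|$ and $h=2|\h{N-2};\h{N}|$, factoring the common $T^*$ out of every column (producing a factor $\det T^*$) and collecting the scalar powers of $\sigma$ and $\delta$, I expect to obtain $f^*(\sigma x,t)=c_f\,f$, $g^*(\sigma x,t)=c_g\,h$ and $h^*(\sigma x,t)=c_h\,g$, where interchanging the two column blocks contributes a sign of the form $(-1)^{N^2}$. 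Forming $q^*(\sigma x,t)=h^*(\sigma x,t)/f^*(\sigma x,t)=(c_h/c_f)(g/f)=-(c_h/c_f)\,r$, the desired reduction $r=\delta q^*(\sigma x,t)$ is equivalent to $c_h/c_f=-\delta$, after which $q$ in \eqref{NLS-q-sol} solves \eqref{nls-hie1}.

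The main obstacle is the constant bookkeeping in this last step: one must check that $\det T^*$ cancels in the ratio $c_h/c_f$ and that the accumulated powers of $\sigma$, $\delta$ and the block-interchange signs collapse to precisely $-\delta$, where the parity identity $\sigma^{N^2}=\sigma^N$ is what makes this work. A secondary consistency point is that $g^*(\sigma x,t)=c_g\,h$ must be compatible with $h^*(\sigma x,t)=c_h\,g$ under a second application of the star--$\sigma$ operation, i.e. $c_g^*c_h=1$; this is exactly the involutivity encoded in \eqref{const-NLS-TT}, so it should hold automatically rather than imposing a new condition.
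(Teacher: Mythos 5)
Your proposal is correct and takes essentially the same route as the paper's proof: you derive \eqref{const-NLS-psi} by pushing $T$ through the exponential via the intertwining $TA^*T^{-1}=-\sigma A$, and then use $T^{-1}=\delta\sigma T^*$ to rewrite each conjugated column, factor out $T^*$, and interchange the two Wronskian blocks, exactly as the paper does to obtain $f^*(\sigma x,t)=|T^*|(\sigma\delta)^N(-1)^N f(x,t)$ and $g^*(\sigma x,t)=|T^*|\delta^{N+1}\sigma^N(-1)^{N-1}h(x,t)$. Your constant bookkeeping checks out ($\det T^*$ cancels, $\sigma^{N^2}=\sigma^N$, and $c_h/c_f=-\delta$), giving $r(x,t)=\delta q^*(\sigma x,t)$ and hence \eqref{NLS-q-sol} just as in the paper.
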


\begin{proof}
For convenience we introduce notation
\begin{equation}
  \widehat{\varphi}^{(N)}(ax)_{[bx]} = \bigl(\varphi(ax),\partial_{bx}\varphi(ax),\partial_{bx}^2\varphi(ax),\cdots,\partial_{bx}^{N}\varphi(ax)\bigr),
  ~~a,b=\pm 1.
  \label{phipsi-ab}
\end{equation}
First, we show that under \eqref{const-NLS}  we have
\begin{align*}
\psi(\sigma x,t)=&\exp\biggl(-\frac{1}{2} \sigma Ax- \frac{i}{2}\sum^{\infty}_{j=1} A^{2j} t_{2j}\biggr)C^{-} \\
       =& \exp\biggl( \frac{1}{2} T A^*T^{-1}x-\frac{i}{2}\sum^{\infty}_{j=1} (TA^*T^{-1})^{2j} t_{2j}\biggr) TC^{+*}\\
       =& T \exp\biggl( \frac{1}{2}A^*x-\frac{i}{2} \sum^{\infty}_{j=1} A^{*2j} t_{2j}\biggr) C^{+*}  \\
       =& T \varphi^*(x,t),
\end{align*}
which gives \eqref{const-NLS-psi}.
Next, with the notation \eqref{phipsi-ab} and assumption \eqref{const-NLS-TT}, we have
\begin{align*}
f^*(\sigma x,t)&= |\widehat{\varphi}^{*(N-1)}(\sigma x)_{[\sigma x]}; T^* \widehat{\varphi}^{(N-1)}(\sigma^2 x)_{[\sigma x]}|    \\
& = |T^*|(\sigma\delta)^{N}|T \widehat{\varphi}^{*(N-1)}(\sigma x)_{[\sigma x]}; \widehat{\varphi}^{(N-1)}(x)_{[\sigma x]}|    \\
& = |T^*|(\sigma\delta)^{N}(-1)^{N}| \widehat{\varphi}^{(N-1)}(x)_{[x]}; T \widehat{\varphi}^{*(N-1)}(\sigma x)_{[x]}|   \\
& = |T^*|(\sigma\delta)^{N}(-1)^{N} f(x,t).
\end{align*}
In a similar manner,
\[g^*(\sigma x, t)=|T^*|\delta^{N+1}\sigma^{N}(-1)^{N-1} h(x,t).\]
Thus, we immediately reach
\[r^*(\sigma x,t)=- \frac{g^*(\sigma x, t)}{f^*(\sigma x, t)}=\frac{\delta h(x,t)}{f(x,t)}=\delta q(x,t),\]
i.e. the reduction \eqref{nls-hie-reduction1} for the NLS hierarchy.
\end{proof}

As for solutions $T$ and $A$  of \eqref{const-NLS}, if we assume they are block matrices of the form
\begin{equation}\label{TA-block}
 T=\left(
     \begin{array}{cc}
       T_1 & T_2 \\
       T_3 & T_4 \\
     \end{array}
   \right),~~A=\left(
                 \begin{array}{cc}
                   K_1 & \mathbf{0} \\
                   \mathbf{0} & K_4 \\
                 \end{array}
               \right),
\end{equation}
where $T_i$ and $K_i$ are $N\times N$ matrices,
then, solutions to \eqref{const-NLS} are given in Table \ref{tab-5-1}
where $\mathbf{K}_{N}\in \mathbb{C}_{N\times N}$.
\begin{table}[H]
\begin{center}
\begin{tabular}{|c|c|c|}
\hline
   $(\sigma, \delta)$    & $T$ &  $A$    \\
\hline
   $(1,-1)$              & $T_1=T_4=\mathbf{0}$, $T_3=-T_2 =\mathbf{I}_{N}$ & $ K_1=-K^*_4=\mathbf{K}_{N}$ \\
\hline
   $(1,1)$              & $T_1=T_4=\mathbf{0}$, $T_3=T_2 =\mathbf{I}_{N}$ & $ K_1=-K^*_4=\mathbf{K}_{N}$ \\
\hline
   $(-1,-1)$              & $T_1=T_4=\mathbf{0}$, $T_3=T_2 =\mathbf{I}_{N}$ & $ K_1=K^*_4=\mathbf{K}_{N}$ \\
\hline
   $(-1,1)$              & $T_1=T_4=\mathbf{0}$, $T_3=-T_2 =\mathbf{I}_{N}$ & $ K_1=K^*_4=\mathbf{K}_{N}$ \\
\hline
\end{tabular}
\caption{$T$ and $A$ for the NLS hierarchy}
\label{tab-5-1}
\end{center}
\end{table}

When $\mathbf{K}_{N}=\mathrm{Diag}(k_{1}, k_{2},  \cdots,  k_{N})$,
we have
\begin{align*}
\varphi= &\bigl(c_1 e^{\theta(k_1)}, c_2 e^{\theta(k_2)},\cdots,c_{N} e^{\theta(k_{N})},
            d_1 e^{\theta(-\sigma k_1^*)}, d_2 e^{\theta(-\sigma k_2^*)},\cdots,d_{N} e^{\theta(-\sigma k_{N}^*)}\bigr)^T,
\end{align*}
where
\begin{equation}\label{nls-theta}
\theta(k_l)=\frac{1}{2}k_l x+\frac{i}{2}\sum^{\infty}_{j=1}k_l^{2j}t_{2j}.
\end{equation}
When  $\mathbf{K}_{N}$ is the  Jordan matrix  $J_{N}(k)$ defined as in \eqref{mathbb-B-21},
we have
\begin{align*}
 \varphi= & \biggl( c e^{\theta(k)}, \frac{\partial_{k}}{1!}(c e^{\theta(k)}),\cdots, \frac{\partial^{N-1}_{k}}{(N-1)!} (c e^{\theta(k)}), \\
 & ~~~~~d e^{\theta(-\sigma k^*)}, \frac{\partial_{k^*}}{1!}(d e^{\theta(-\sigma k^*)}),\cdots,
 \frac{\partial^{N-1}_{k^*}}{(N-1)!}(d e^{\theta(-\sigma k^*)})\biggr)^T.
\end{align*}

Note that one more solution for the case $(\sigma,\delta)=(-1,\pm 1)$ is
\begin{align}
  T_1=-T_4=\sqrt{-\delta}\mathbf{I}_{N},~ T_2=T_3 =\mathbf{0}_{N},~~ K_1=\mathbf{K}_{N},~~K_4=-\mathbf{H}_{N},
  \label{TA-2}
\end{align}
where $\mathbf{K}_{N}, \mathbf{H}_{N}\in \mathbb{R}_{N\times N}$.
When
\[\mathbf{K}_{N}=\mathrm{Diag}(k_{1}, k_{2},\cdots,k_{N}),~~\mathbf{H}_{N}=\mathrm{Diag}(h_{1}, h_{2},\cdots,h_{N}),\]
we have
\begin{equation*}
\varphi= \Bigl(c_1 e^{\theta(k_1)}, c_2 e^{\theta(k_2)},\cdots,c_{N} e^{\theta(k_{N})},
 d_1 e^{\theta(-h_1)},  d_2 e^{\theta(-h_2)},\cdots, d_{N} e^{\theta(-h_{N})}\Bigr)^T,
\end{equation*}
and when $\mathbf{K}_{N}=J_{N}[k],~\mathbf{H}_{N}=J_{N}[h]$ as defined in \eqref{mathbb-B-21}, we have
\begin{align*}
 \varphi= & \biggl(c e^{\theta(k)}, \frac{\partial_{k}}{1!}(c e^{\theta(k)}),\cdots, \frac{\partial^{N-1}_{k}}{(N-1)!} (c e^{\theta(k)}),\\
 & ~~~~~~ d e^{\theta(-h)}, \frac{\partial_{h}}{1!}(d e^{\theta(-h)}),\cdots,\frac{\partial^{N-1}_{h}}{(N-1)!}(d e^{\theta(-h)})\biggr)^T,
\end{align*}
where $\theta(k)$ is defined in \eqref{nls-theta}.

\subsection{Notes}\label{sec-5-3}

The reduction technique we presented in this section is first introduced in \cite{CheZ-AML-2018}.
The technique is also valid for all one-field reductions of the AKNS hierarchy (cf.\cite{CheDLZ-SPAM-2018})
as well as for one-field reductions of other systems that allows us double Wronskian solutions.

It is known that the matrix $A$ in \eqref{akns-var-psi-AC} and its similar forms lead to same solutions for the AKNS hierarchy \eqref{akns-hierarchy},
and the eigenvalues of $A$ correspond to discrete spectrum of the AKNS
spectral problem  \eqref{akns-spectral}.
Therefore the structures of $A$ in Table \ref{tab-5-1}
indicate how the distribution of discrete spectrum changes with different reductions.

\section{Discrete case: the lpKdV equation}\label{sec-6}

\subsection{The lpKdV equation and discrete stuff}\label{sec-6-1}

There has been a surge of interest in discrete integrable systems in the last two decades (cf.\cite{HJN-book-2016} and the references therein).
Let us get familiar with some notations in discrete.
Suppose $u(n,m)$ to be a function defined on $\mathbb{Z}\times \mathbb{Z}$ where $n$ and $m$ are two discrete independent variables.
The basic operation on  $u(n,m)$ is a shift in stead of differentiation. In this section we employ notations
\begin{equation}
u\doteq u(n,m),~~ \widetilde{u}\doteq u(n+1,m),~~\widehat{u}\doteq u(n,m+1),~~\widehat{\widetilde{u}}\doteq u(n+1,m+1).
\label{nota-dis}
\end{equation}
A discrete Hirota's bilinear equation  is written as (cf.\cite{HieZ-JPA-2009})
  \begin{equation}
    \label{eq:HB}
    \sum_j\, c_j\, f_{j}(n+\nu_{j}^+,m+\mu_{j}^+)\,g_{j}(n+\nu_{j}^-,m+\mu_{j}^-)=0,
  \end{equation}
where $\nu_{i}^++\nu_{i}^-=\nu_{k}^++\nu_{k}^-,  \mu_{i}^++\mu_{i}^-=\mu_{k}^++\mu_{k}^-, \forall i,k$.
Casoratian is a discrete version of Wronskian.
Let
\begin{equation}\label{C-entry-vec}
\varphi(n,m,l)=(\varphi_1(n,m,l),\varphi_2(n,m,l),\cdots,\varphi_{N}(n,m,l))^T,
\end{equation}
where $\varphi_i$ are functions of $(n,m,l)$ defined on $\mathbb{Z}\times \mathbb{Z}\times \mathbb{Z}$.
A Casoratian w.r.t. $l$ together with its compact expression is
\begin{align}
  C(\varphi)=&|\varphi(n,m,0),\varphi(n,m,1),\cdots,\varphi(n,m,N-1)| \nonumber\\
  = & |\varphi(0),\varphi(1),\cdots,\varphi(N-1)|
  = |0,1,\cdots,N-1| =  |\h{N-1}|. \label{Caso-H1-1}
\end{align}
Similarly, we have $|\h{N-2},N|=|0,1,\cdots,N-2,N|$.

The lpKdV equation with the notations in \eqref{nota-dis} is written as
\begin{equation}
(u-\th{u})(\t{u}-\h{u})=b^2-a^2,
\label{lpKdV}
\end{equation}
which is a discrete version of the potential KdV equation, where $a$ and $b$ are respectively the spacing parameters of $n$ and $m$ directions.
By the transformation
\begin{equation}
  u = a n + b m + c_0  - \frac{g}{f},
\label{trans-lpkdv}
\end{equation}
where $c_0$ is a constant, \eqref{lpKdV} is bilinearised as \cite{HieZ-JPA-2009}
\begin{subequations}  \label{eq:bil-lpkdv}
\begin{align}
& \mathcal{H}_1=\h g\t f-\t g\h f+(a-b)(\h f\t f- f\th f)=0, \label{eq:bil-lpkdv-a}\\
& \mathcal{H}_2=g\th f-\th g f+(a+b)(f\th f-\h f\t f)=0. \label{eq:bil-lpkdv-b}
\end{align}
\end{subequations}

\subsection{Casoratian solutions}\label{sec-6-2}

\begin{theorem}\label{thm-6-1}
The bilinear lpKdV equation \eqref{eq:bil-lpkdv} admits Casoratian solutions
\begin{equation}
f(\varphi)=|\h{N-1}|,\,g(\varphi)=|\h{N-2},N|,
\label{cas-f/g-1}
\end{equation}
where the elementary column vector $\varphi(n,m,l)$ satisfies
\begin{subequations}
\label{caso-cond}
\begin{align}
& \t\varphi-\b \varphi =(a-c) \varphi,\label{caso-cond-nl}\\
& \h\varphi-\b \varphi =(b-c) \varphi,\label{caso-cond-ml}
\end{align}
and there exists an auxiliary vector $\psi$ and an invertible matrix
$\Gamma=\Gamma(m)$ such that
\begin{align}
& \varphi = \Gamma \psi,\label{caso-cond-GPP}\\
& \psi + \b \psi= (b+c) \h\psi. \label{caso-cond-psi-ml}
\end{align}
\end{subequations}
Here $c$ is a constant and $\b f(n,m,l)\doteq f(n,m,l+1)$.
Note that $\Gamma$ is  independent of $n,l$,
and it then follows from \eqref{caso-cond-ml} and \eqref{caso-cond-GPP} that
\begin{equation}
\t\psi-\b \psi =(a-c) \varphi.\label{caso-cond-psi-nl}
\end{equation}
\end{theorem}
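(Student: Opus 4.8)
The aim is to verify that the Casoratians \eqref{cas-f/g-1} satisfy both bilinear equations $\mathcal{H}_1=0$ and $\mathcal{H}_2=0$ of \eqref{eq:bil-lpkdv}; this is the fully discrete analogue of the Wronskian verification behind Theorem \ref{thm-3-1-1}, so the plan is to convert every shift of $f$ and $g$ into Casoratians sharing a common column block, substitute into $\mathcal{H}_1,\mathcal{H}_2$, and collapse the result onto the Pl\"ucker relation \eqref{plu-r-1}. Before that I would dispose of the side claim: since $\Gamma=\Gamma(m)$ is invertible and independent of $n$ and $l$, applying the $n$- and $l$-shifts to \eqref{caso-cond-GPP}, substituting into the $n$-shift relation \eqref{caso-cond-nl} and cancelling $\Gamma$ yields \eqref{caso-cond-psi-nl}. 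The same $l$-independence makes each Casoratian factor through $\Gamma$, e.g. $f(\varphi)=|\Gamma|\,f(\psi)$, a feature I would exploit to handle the $m$-direction through the simpler recursion \eqref{caso-cond-psi-ml}.

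Write $T_n,T_m,T_l$ for the three commuting shifts, so that \eqref{caso-cond-nl}--\eqref{caso-cond-ml} read $T_n\varphi=(T_l+(a-c))\varphi$ and $T_m\varphi=(T_l+(b-c))\varphi$. Thus the $n$-shift of a Casoratian is obtained by replacing the column-generating operator $T_l$ with $T_l+(a-c)$, the $m$-shift with $T_l+(b-c)$, and the double shift $T_nT_m$ with their product; expanding columnwise by multilinearity and discarding the repeated-column terms leaves finite sums, e.g. $\t f=\sum_{j=0}^{N}(a-c)^{j}f_{(j)}$, where $f_{(j)}$ is the Casoratian on columns $0,1,\dots,N$ with column $j$ deleted, so that $f_{(N)}=f$ and $f_{(N-1)}=g$; one gets $\h f$ likewise with $(b-c)$, and similar (longer) expansions for $\t g,\h g,\th f,\th g$. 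These are the discrete counterparts of the derivative list opening the proof of Theorem \ref{thm-3-1-1}.

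For $\mathcal{H}_1$, substituting these expansions and using $a-b=(a-c)-(b-c)$ removes every explicit occurrence of the gauge constant $c$, leaving a polynomial in $(a-c),(b-c)$ whose coefficients are three-term products $|M,\mathbf a,\mathbf b||M,\mathbf c,\mathbf d|-|M,\mathbf a,\mathbf c||M,\mathbf b,\mathbf d|+|M,\mathbf a,\mathbf d||M,\mathbf b,\mathbf c|$; each vanishes by \eqref{plu-r-1} with $M=(\h{N-3})$, exactly as in Theorem \ref{thm-3-1-1}. For $\mathcal{H}_2$ the sum $a+b=(a-c)+(b-c)+2c$ leaves terms carrying an explicit $c$ that no $c$-free Pl\"ucker identity can cancel; here I would feed in the auxiliary structure, using the $\psi$-relation \eqref{caso-cond-psi-ml} together with \eqref{caso-cond-ml} and $\varphi=\Gamma\psi$ to produce a $c$-dependent identity linking $\varphi$ and $\b\varphi$, equivalently a relation among the pure-$l$ Casoratians. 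This identity supplies precisely the $c$-proportional combination needed, after which $\mathcal{H}_2$ again reduces to \eqref{plu-r-1}.

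The hard part is $\mathcal{H}_2$. The doubly-shifted $\th f,\th g$ produce the largest crop of distinct Casoratian species (those reaching columns $N+1$ and beyond), and the real obstacle is extracting the correct $c$-dependent relation from \eqref{caso-cond-GPP}--\eqref{caso-cond-psi-ml} and showing it cancels the $c$-terms exactly, so that every spurious higher-shift Casoratian pairs off and the remainder is a single Pl\"ucker relation. By contrast $\mathcal{H}_1$, being the difference-type equation and free of $c$, is routine once the shift formulas are in hand.
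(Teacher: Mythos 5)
The genuine gap is the one you flag yourself: your treatment of \eqref{eq:bil-lpkdv-b} never materialises. The entire function of the auxiliary pair $(\psi,\Gamma)$ is to furnish compact determinant formulas for the up-hat-shifted Casoratians, and the paper's proof of $\mathcal{H}_2=0$ consists precisely in deriving them: from \eqref{caso-cond-psi-ml} one gets $(b+c)^{N-2}\h f(\psi)=|\psi(0),\dots,\psi(N-2),\h\psi(N-2)|$, and transferring through $\varphi=\Gamma\psi$, $f(\varphi)=|\Gamma|f(\psi)$ yields \eqref{f-hat}, \eqref{f-hat-dt} and \eqref{fg-h}, in which the new column species $\c E\varphi(l)=\Gamma\h\Gamma^{-1}\h\varphi(l)$ and the gauge factor $|\h\Gamma||\Gamma^{-1}|$ appear; combined with the down-shift formulas \eqref{f-dt}--\eqref{fg-dt}, the down-tilde-shifted $\mathcal{H}_2$ then collapses to a single instance of \eqref{plu-r-1}. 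Your proposal replaces all of this with the promise of ``a $c$-dependent identity linking $\varphi$ and $\b\varphi$'' and explicitly defers ``the real obstacle''; that obstacle \emph{is} the theorem. Moreover, one can see concretely that no column-universal (Pl\"ucker-type) identity can close the argument: for $N=1$ your own expansions give $\t f=g+(a-c)f$, $\h f=g+(b-c)f$, $\th f=g_2+(a+b-2c)g+(a-c)(b-c)f$, $\th g=g_3+(a+b-2c)g_2+(a-c)(b-c)g$ with $g_j=|j|$, and $\mathcal{H}_2$ reduces to $gg_2-g_3f+2c(fg_2-g^2)$, which is nonzero for generic columns obeying only \eqref{caso-cond-nl}--\eqref{caso-cond-ml}; it vanishes exactly because \eqref{caso-cond-GPP}--\eqref{caso-cond-psi-ml} force the two $l$-eigenvalues $c\pm k_j$ to sum to $2c$. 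So the cancellation mechanism for $\mathcal{H}_2$ lives entirely in the auxiliary conditions, and producing the corresponding determinant identities is the unavoidable core of the proof, not a finishing step.

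Two secondary remarks. Your up-shift multilinear expansions ($\t f=\sum_{j=0}^{N}(a-c)^{j}f_{(j)}$, and the analogues for $\h f, \t g,\h g,\th f,\th g$) are correct as stated, but your claim that in $\mathcal{H}_1$ each coefficient of $(a-c)^{i}(b-c)^{j}$ assembles into a single three-term combination \eqref{plu-r-1} is asserted rather than shown, and the double sums arising from the quadratic products are exactly the bookkeeping the paper avoids: it verifies the \emph{down-shifted} equations, where \eqref{caso-cond-nl}--\eqref{caso-cond-ml} yield one-modified-column formulas \eqref{f-dt}--\eqref{fg-dt} (a single prefactor $(a-c)^{N-2}$ or $(b-c)^{N-2}$ and one new column $\dt\varphi(N-2)$ or $\dh\varphi(N-2)$), so that $\mathcal{H}_1$ vanishes by one application of \eqref{plu-r-1} with $M=(\h{N-3})$. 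Finally, your derivation of \eqref{caso-cond-psi-nl} from the $n$-shift relation and the $n,l$-independence of $\Gamma$ is correct and matches the intent of the statement.
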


\begin{proof}
Making use of evolution relations \eqref{caso-cond-nl} and \eqref{caso-cond-ml},
one can derive shift relations of $f$ and $g$ \cite{HieZ-JPA-2009},
\begin{subequations}
\begin{align}
& -(a-c)^{N-2}\dt f  =|\h{N-2},\dt \varphi(N-2)|,\label{f-dt}\\
& -(b-c)^{N-2}\dh f  =|\h{N-2},\dh \varphi(N-2)|,\\
& -(a-c)^{N-2}[\dt g+(a-c) \dt f]=|\h{N-3},N-1, \dt \varphi(N-2)|,\label{fg-dt}\\
& -(b-c)^{N-2}[\dh g+(b-c) \dh f]=|\h{N-3},N-1, \dh \varphi(N-2)|.
\end{align}
\end{subequations}
Then, for down-tilde-hat-shifted \eqref{eq:bil-lpkdv-a}, we find
\begin{align*}
  & [(a-c)(b-c)]^{N-2}[-(a-b)f\dth f + \dh f (\dt g+(a-c)\dt f) -\dt f(\dh g + (b-c)\dh f)]\\
  =& -|\h{N-1}|  |\h{N-3},\dh{\varphi}(N-2),\dt{\varphi}(N-2)|\\
   & +|\h{N-2},\dh{\varphi}(N-2)| |\h{N-3},N-1,\dt{\varphi}(N-2)| \\
   & -|\h{N-2},\dt{\varphi}(N-2)|  |\h{N-3},N-1,\dh{\varphi}(N-2)|
\end{align*}
which vanishes in light of \eqref{plu-r-1} where   $M=(\h{N-3})$,
and $\mathbf{a}$, $\mathbf{b}$, $\mathbf{c}$ and $\mathbf{d}$
take $\varphi(N-2)$, $\varphi(N-1)$, $\dh{\varphi}(N-2)$ and $\dt{\varphi}(N-2)$, respectively.

To prove \eqref{eq:bil-lpkdv-b}, we make use of the auxiliary vector $\psi$ and
consider Casoratian $f(\psi)$, which evolutes in $m$ direction as
\begin{equation*}
(b+c)^{N-2}\h f(\psi)=|\psi(0),\psi(1),\cdots, \psi(N-2),\h\psi(N-2)|.
\end{equation*}
Then, noting that $\varphi=\Gamma \psi$ and $f(\varphi)=|\Gamma|f(\psi)$, we recover $\h f(\varphi)$ as
\begin{align*}
(b+c)^{N-2} \h f(\varphi) =&(b+c)^{N-2}|\h \Gamma| \h f(\psi)\\
= &|\h \Gamma||\Gamma^{-1}||\varphi(0),\cdots,\varphi(N-2), \c E \varphi(N-2)|,
\end{align*}
i.e.
\begin{equation}
(b+c)^{N-2} \h f(\varphi) = |\h \Gamma||\Gamma^{-1}||\h{N-2}, \c E \varphi(N-2)|,
\label{f-hat}
\end{equation}
where $\c E \varphi(l)=\Gamma\h \Gamma^{-1} \h \varphi(l)$.
With the help of $f(\psi)$ and $g(\psi)$, we also get
\begin{equation}
  (a+b) [(a-c)(b+c)]^{N-2} \dt{\h f}(\varphi)=|\h \Gamma||\Gamma^{-1}| |\h{N-3},\dt \varphi(N-2),\c E \varphi(N-2)|
  \label{f-hat-dt}
\end{equation}
and
\begin{equation}
(b+c)^{N-2}[\h g(\varphi)-(b+c) \h f(\varphi)] =|\h \Gamma||\Gamma^{-1}||\h{N-3},N-1,\c E \varphi(N-2)|.
\label{fg-h}
\end{equation}
Then, with formulas \eqref{f-dt}, \eqref{fg-dt}, \eqref{f-hat}, \eqref{f-hat-dt} and \eqref{fg-h},
one can verify \eqref{eq:bil-lpkdv-b} in its down-tilde-shifted version.

Thus we complete the proof.
\end{proof}

Explicit $\varphi$ and $\psi$ that satisfy \eqref{caso-cond} can be given according to the canonical forms of $\Gamma$.
When
\begin{equation}\label{Gamma-dia}
\Gamma=\mathrm{Diag}(\gamma_{1},\gamma_{2},\cdots,\gamma_{N}),~~\gamma_{j}=(b^2-k_j^2)^m
\end{equation}
with distinct $k_j$, one can take
\begin{align}
& \varphi_i(n,m,l)=\varphi^+_i+\varphi^{-}_i,~~\varphi^{\pm}_i= \varrho_{i}^{\pm}(c\pm k_i)^l(a \pm k_i)^n(b \pm k_i )^m,\label{varp-gen}\\
& \psi_i(n,m,l)=\psi^+_i+\psi^{-}_i,~~
\psi^{\pm}_i= \varrho_{i}^{\pm}(c\pm k_i)^l(a \pm k_i)^n(b \mp k_i )^{-m},
\label{psi-gen}
\end{align}
where $\varrho_{i}^{\pm}\in \mathbb{C}$.
When $\Gamma$ is a LTTM defined as
\eqref{A-LTT} with $a_j=\frac{1}{j!}\partial^{j}_{k_1}\gamma_1$
where $\gamma_1$ is defined in \eqref{Gamma-dia}, one can take
\begin{subequations}
\begin{align}
& \varphi(m,n,l)=\mathcal{A}^+\mathcal{Q}^{+} +\mathcal{A}^{-}\mathcal{Q}^{-},~~ \mathcal{A}^{\pm}\in \t G_N,
\label{varp-jor-1}\\
& \psi(m,n,l)=\mathcal{B}^+\mathcal{P}^{+} +\mathcal{B}^{-}\mathcal{P}^{-},~~ \mathcal{B}^{\pm}\in \t G_N,
\label{psi-jor-1}
\end{align}
where
\begin{align}
& \mathcal{Q}^{\pm} =(Q^{\pm}_{0},Q^{\pm}_{1},\cdots, Q^{\pm}_{N-1})^T,~~
Q^{\pm}_{s} =\frac{1}{s!}\partial^{s}_{k_1}\varphi^{\pm}_1,
\label{Q-k1}\\
& \mathcal{P}^{\pm} =(P^{\pm}_{0},P^{\pm}_{1},\cdots, P^{\pm}_{N-1})^T,~~
P^{\pm}_{s} =\frac{1}{s!}\partial^{s}_{k_1}\psi^{\pm}_1.
\label{P-k1}
\end{align}
\end{subequations}
and
$\varphi^{\pm}_1$ and $\psi^{\pm}_1$ are defined in \eqref{varp-gen} and \eqref{psi-gen} respectively.
When $\Gamma$ is a LTTM defined as
\eqref{A-LTT} with $a_j=\frac{1}{(2j)!}\partial^{2j}_{k_1}\gamma_1|_{k_1=0}$
where $\gamma_1$ defined in \eqref{Gamma-dia}, one can take
\begin{subequations}
\label{varp-jor-rs}
\begin{equation}
\varphi(m,n,l)=\mathcal{A}^+\mathcal{R}^{+} +\mathcal{A}^{-}\mathcal{R}^{-},~~ \mathcal{A}^{\pm}\in \t G_N,
\end{equation}
where $\mathcal{R}^{\pm} =(\mathcal{R}^{\pm}_{0},\mathcal{R}^{\pm}_{1},\cdots, \mathcal{R}^{\pm}_{N-1})^T$ and
\begin{align}
& \mathcal{R}^{+}_{s} =\frac{1}{(2s)!}\partial^{2s}_{k_1}\varphi_1|_{k_1=0},~\mathrm{with}~\varrho^-_1=\varrho^+_1,\\
& \mathcal{R}^{-}_{s} =\frac{1}{(2s+1)!}\partial^{2s+1}_{k_1}\varphi_1|_{k_1=0},~\mathrm{with}~\varrho^-_1=-\varrho^+_1,
\end{align}
\end{subequations}
with $\varphi_1$ defined in \eqref{varp-gen}. This case yields rational solutions.

\subsection{Notes}\label{sec-6-3}

The system \eqref{caso-cond} is the ``LDES'' of the bilinear lpKdV equation,
in which $\psi$ and $\Gamma$ have been used as auxiliaries in order to implement Casoration verifications.
More formulas of shifted Casotarians can be found in \cite{HieZ-JPA-2009}
with the help of auxiliary vectors.
Examples that Theorem \ref{thm-2-3-1} plays its role in discrete case can be found in \cite{HieZ-JMP-2010,HieZ-SIGMA-2011}.
It can be proved that $\Gamma$ and its similar forms lead to same solutions to the lpKdV equation \eqref{lpKdV},
which means one can make use of  canonical forms of $\Gamma$ to derive and classify solutions,  including rational solutions.
For more results on rational solutions in Casoration form for fully discrete 2D integrable systems
one can refer to \cite{ShiZ-SIGMA-2011,ZhaZ-SIGMA-2017,ZhaZ-JNMP-2019}.

\eqref{varp-gen} is a discrete counterpart of the continuous exponential function (e.g. \eqref{entry-Case1-1}),
where $\pm k_i$ satisfy $x^2-k^2_i=0$.
In principle, for fully discrete 2D integrable systems, their dispersion relations are defined by the curve
\begin{equation}
P_M(x,k)=\sum^{M}_{i=1}a_i(x^i-k^i)=0,
\end{equation}
solutions of which  are used to define discrete exponential functions.
For more details one can refer to \cite{HieZ-SIGMA-2011,ZhaZN-SAPM-2012}.

Besides, the techniques and treatments used in Sec.\ref{sec-3} and \ref{sec-4} for Wronskians
are also valid to Casoratians. For example, the technique to verify BT with Wronskian solutions in Sec.\ref{sec-3-2}
has applied to Casotatians \cite{ZhoZZ-PLA-2009},
and the reduction procedure of double Wronskians described in Sec.\ref{sec-5-2}
has been also generalised to doubel Casoratians \cite{DenLZ-AMC-2018},
although \cite{DenLZ-AMC-2018} and \cite{ZhoZZ-PLA-2009} describe semidiscrete models.

\section{Conclusions}\label{sec-7}

We have reviewed Wronskian technique and solutions in Wronskian/Casoratian forms for continuous and discrete integrable systems.
In this context four instructive examples were employed.
By the KdV equation we showed standard verifying procedures of Wronskian solutions of the bilinear KdV equation and the bilinear BT.
It also servered as an example that displays the construction of  general solutions of the LDES and
furthermore the classification of solutions for the KdV equation
according to the canonical forms of the coefficient matrix $A$ in its LDES.
The second example is the mKdV equation which is special in many aspects.
Note that there is a complex conjugate operation in its LDES;
there is no solution classification as the KdV equation;
breathers in Wronskian form result from block diagonal $\mathbb{B}$  \eqref{B-breather};
solitons or breathers of double pole case  are not singular and travel asymptotically with  logarithmic curves in stead of straight lines;
rational solutions are obtained by means of GT \eqref{GT}.
The third example is the AKNS hierarchy, together with its double Wronskian solutions and reductions.
A reduction technique was showed to get double Wronskian solutions for the reduced hierarchy.
Finally, the lpKdV equation served as an example of fully discrete integrable systems
which have received significant progress in the recent two decades.
We described how to obtain shift formulas of Casoratians by introducing auxiliary vectors, so that Casoratians solutions of
discrete bilinear equations can be verified as in the continuous case.

More than thirty years passed since the Wronskian technique was proposed in 1983 \cite{FreN-PLA-1983}.
Almost no secret is left behind this technique.
As a matter of fact, solutions in Wronskian form have their own advantage
in presenting explicit multiple-pole solutions (including rational solutions)
and understanding relations between simple-pole and multiple-pole solutions
by taking limits w.r.t. poles $\{k_j\}$.
This benefits from  the regular structure of a Wronskian
that each row is governed by a single $k_j$.
Note that the IST \cite{AblS-1981}, Cauchy matrix approach \cite{NijAH-JPA-2009} and operator approach (see \cite{Sch-LAA-2010}
and the references therein) yield solutions in terms of the Cauchy matrix,
but the expression for multiple-pole solutions are not as simple as those in Wronskian case (cf.\cite{ZhaZ-SAPM-2013}),
and so far no rational solutions are presented in terms of  Cauchy matrix.

There are other compact expressions for solutions of integrable systems, such as Grammian and Pfaffian, mainly derived from bilinear methods.
For more details one may refer to \cite{Hir-book-2004}.

\section*{Acknowledgements}

The project is supported by the NSF of China (Nos.11631007 and 11875040).

\end{document}